%% file: main_camera_ready.tex
\documentclass[11pt]{article}

\usepackage[final]{acl}

\usepackage{times}
\usepackage{latexsym}

\usepackage[T1]{fontenc}

\usepackage[utf8]{inputenc}

\usepackage{microtype}

\usepackage{inconsolata}

\usepackage{graphicx}

\usepackage{subcaption}
\usepackage{booktabs} 
\usepackage{tabularx,array,threeparttable}
\newcolumntype{Y}{>{\raggedright\arraybackslash}X}
\usepackage{algpseudocode}
\usepackage{algorithm}
\usepackage{amsfonts}
\usepackage{amsmath}
\usepackage{amssymb}
\usepackage{mathtools}
\usepackage{amsthm}

\usepackage{url}
\usepackage{multirow}
\usepackage{bbm}
\usepackage{enumitem}
\usepackage{wrapfig}

\input{math_commands.tex}

\theoremstyle{plain}
\newtheorem{theorem}{Theorem}[section]

\newtheorem{lemma}[theorem]{Lemma}

\theoremstyle{definition}
\newtheorem{definition}[theorem]{Definition}

\theoremstyle{remark}

\title{Unsupervised Speech Recognition at the Syllable Level}
\author{Liming Wang$^1$\thanks{Now at the Chinese University of Hong Kong}, Kai-Wei Chang$^1$, Kunio Kashino$^2$, David Harwath$^3$,\\ 
\textbf{Mark Hasegawa-Johnson}$^4$, \textbf{James R. Glass}$^1$\\
$^1$Massachusetts Institute of Technology,
$^2$NTT, Inc.,\\
$^3$University of Texas at Austin,
$^4$University of Illinois Urbana-Champaign\\
\texttt{limingw@csail.mit.edu}
}

\newcommand\liming[1]{}
\newcommand{\dd}{\mathrm{d}}

\def\aishell/{AISHELL-3}
\def\hubl/{HuBERT}
\def\mincut/{min-cut}
\def\km/{K-means}
\def\reborn/{REBORN}
\def\sylcipher/{SylCipher}
\def\pdf/{probability distribution}
\def\prenet/{pre-net}
\def\postnet/{post-net}
\def\pyphen/{Pyphen}
\def\wtov/{wav2vec 2.0}
\def\wtovu/{wav2vec-U}

\def\subsecvsp{\vspace{-0.2cm}}
\def\figvsp{\vspace{-0.25cm}}

\begin{document}
\maketitle
\begin{abstract}
\input{sections/abs}
\liming{Emphasize on the semi-supervised contributions, relax the theoretical assumptions. A teaser plot+add multilingual features to figure 1}
\end{abstract}

\section{Introduction\liming{(READY)}}\label{sec:intro}
\liming{Phrase it to have more novelty}
\input{sections/intro}

\section{Related work\liming{(READY)}}\label{sec:related_work}
\input{sections/related_work}

\section{Syllable-level unsupervised speech recognition\liming{(READY)}}\label{sec:formulation}
\input{sections/problem_formulation}

\section{SylCipher: Syllable-level UASR via information-constrained masked language modeling\liming{(READY)}}\label{sec:method}
\input{sections/method}

\section{Experiments}\label{sec:exp}
\input{sections/experiments}

\section{Conclusion}\label{sec:concl}
\liming{Check format}
\input{sections/conclusion}

\section*{Acknowledgments}
We would like to thank NTT inc. for the funding support, and Junrui Ni for helping to set up his JSTTI code.

\section*{Limitations}\label{app:limit}
\sylcipher/ is not yet language-universal, since different languages use different writing systems and require linguistic knowledge to properly syllabify. For example, languages such as Hebrew and Arabic omits vowels in their writing system, which could pose challenges to existing syllabifiers.
While our experiments show that the method can be adapted to more resource-efficient tokenizers such as BPE with minimal modifications, coming up with a language-universal tokenization method remains an open problem.
Further, the iterative training procedure can be further simplified into an end-to-end approach. Lastly, improving the robustness of \sylcipher/ under domain mismatch between speech and text remains an open challenge. 
\paragraph{Use of AI assistants.} We used AI assistants only for language polishing and improving clarity. All technical content, claims, experiments, and conclusions were developed and verified by the authors.


\bibliography{emnlp2026}

\appendix

\input{sections/appendix}
\label{sec:appendix}

\end{document}

%% file: math_commands.tex
\usepackage{amsmath,amsfonts,bm}

\def\eqref#1{equation~\ref{#1}}

\def\1{\bm{1}}

\newcommand{\brac}[1]{\left(#1\right)}
\newcommand{\sbrac}[1]{\left[#1\right]}

\DeclareMathAlphabet{\mathsfit}{\encodingdefault}{\sfdefault}{m}{sl}
\SetMathAlphabet{\mathsfit}{bold}{\encodingdefault}{\sfdefault}{bx}{n}

\def\tX{{\tilde{X}}}

\def\gX{{\mathcal{X}}}
\def\gY{{\mathcal{Y}}}

\def\sR{{\mathbb{R}}}

\newcommand{\E}{\mathbb{E}}
\newcommand{\Ls}{\mathcal{L}}

\newcommand{\KL}{D_{\mathrm{KL}}}

\DeclareMathOperator*{\argmax}{arg\,max}

%% file: sections/abs.tex
Training speech recognizers with unpaired speech and text -- known as unsupervised speech recognition (UASR) -- is a crucial step toward extending ASR to low-resource languages in the long-tail distribution and enabling multimodal learning from non-parallel data. However, existing approaches based on phones often rely on costly resources such as grapheme-to-phoneme converters (G2Ps) and struggle to generalize to languages with ambiguous phoneme boundaries due to training instability. In this paper, we address both challenges by introducing a syllable-level UASR framework based on masked language modeling, which avoids the need for G2P and the instability of GAN-based methods. Our approach achieves up to a 40\% relative reduction in character error rate (CER) on LibriSpeech and generalizes effectively to low-resource languages that have remained particularly difficult for prior methods. Code is publicly available\footnote{https://github.com/cactuswiththoughts/SylCipher}.\liming{create a github}

%% file: sections/intro.tex
Recent advances in self-supervised learning~\citep{Baevski2020-wav2vec2,Hsu2022-hubert,chen2022-wavlm,chung2021w2vbert,chen2024-xeus,mohamed2022self} and spoken language modeling~\citep{arora2025-slm,chu2024qwen2,defossez2024moshi} have made voice assistants increasingly capable. Yet these systems remain far from \emph{language-universal}: most support only a handful of world's languages~\citep{chen2024-xeus,conneau21_interspeech}, largely because large paired speech-text corpora are unavailable for many languages. 
\begin{figure}
    \centering
    \includegraphics[width=0.49\textwidth]{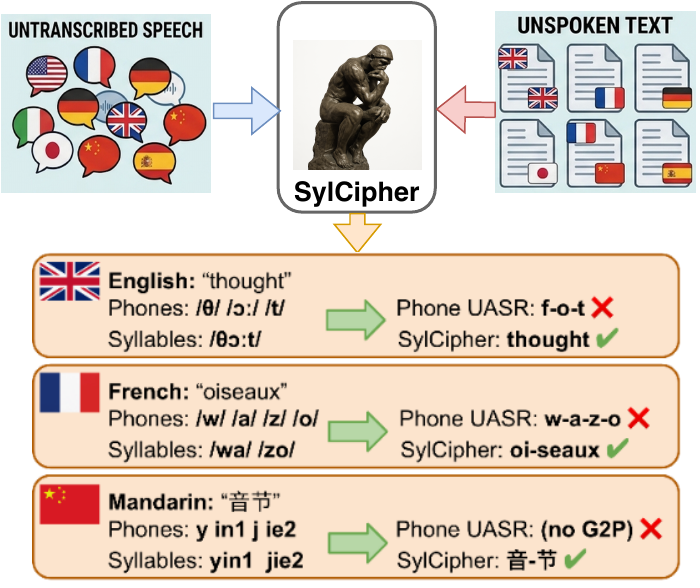}
    \caption{Syllable-level speech-text decipherment.}
    \label{fig:teaser}
\end{figure}
\begin{figure*}
    \centering
    \begin{subfigure}{0.5\textwidth}
        \centering
        \includegraphics[width=0.95\textwidth]{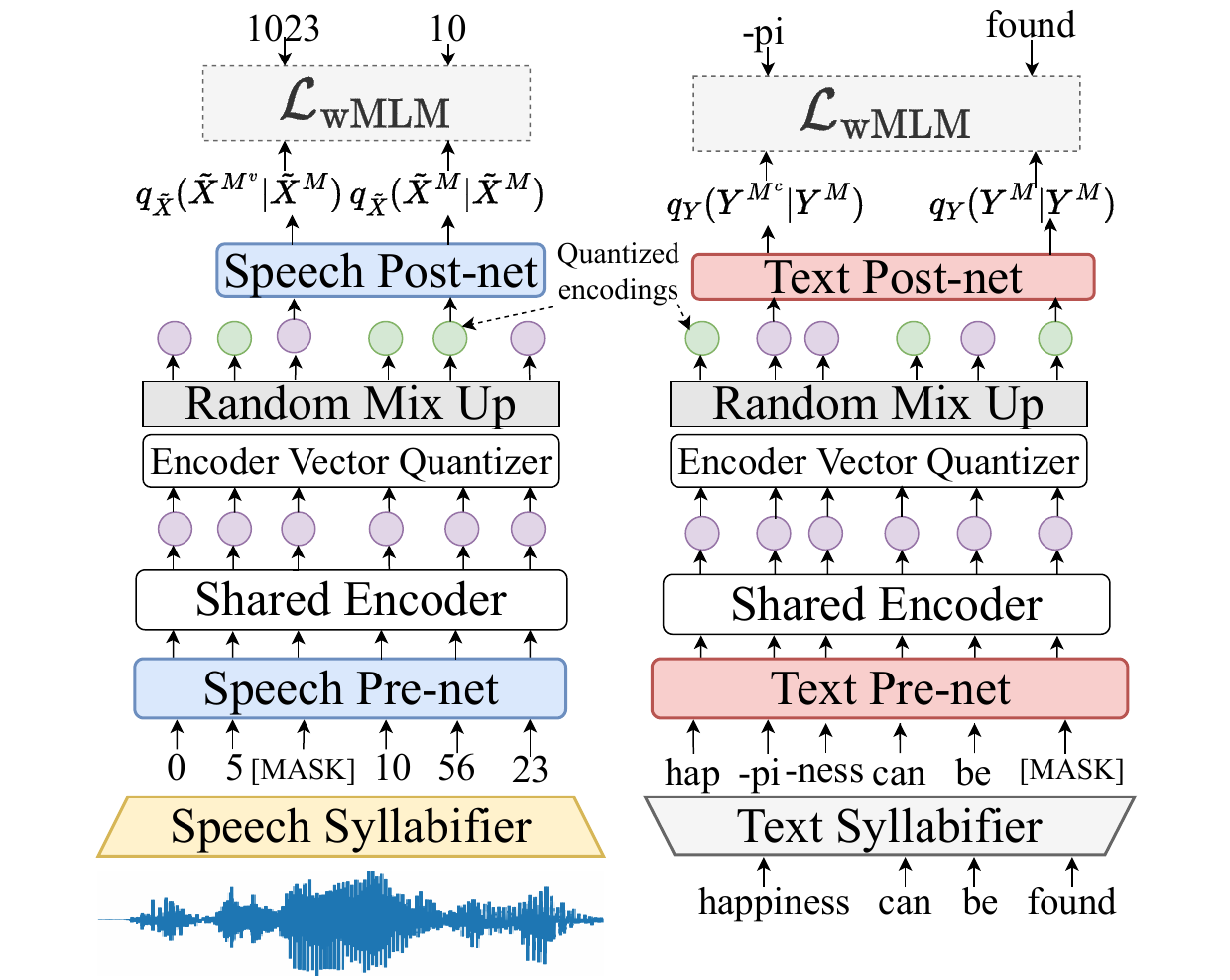}
        \caption{MLM-based stages}
    \end{subfigure}
    \begin{subfigure}{0.23\textwidth}
        \centering
        \includegraphics[width=0.94\textwidth]{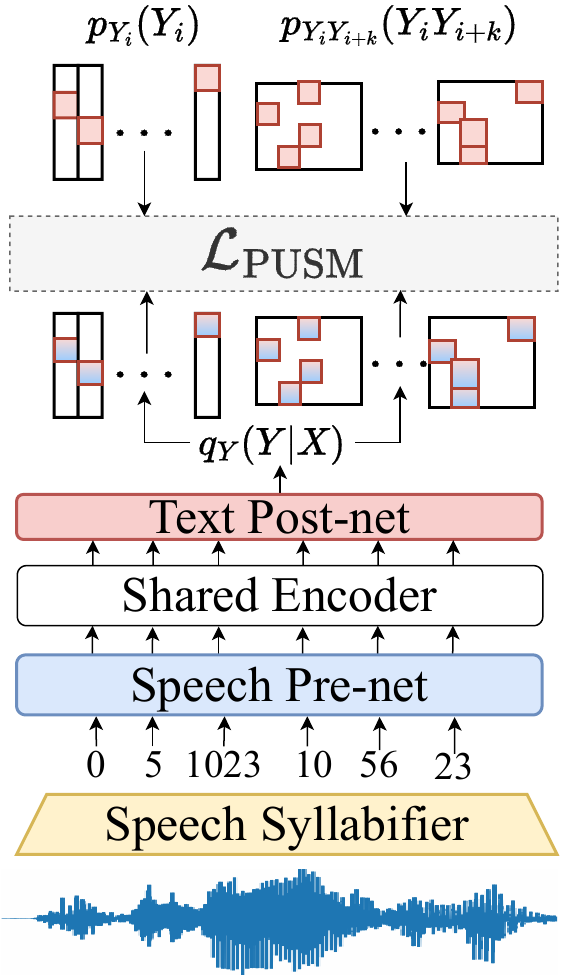}
        \caption{PUSM stage}
    \end{subfigure}
    \begin{subfigure}{0.24\textwidth}
    \centering
    \includegraphics[width=0.95\textwidth]{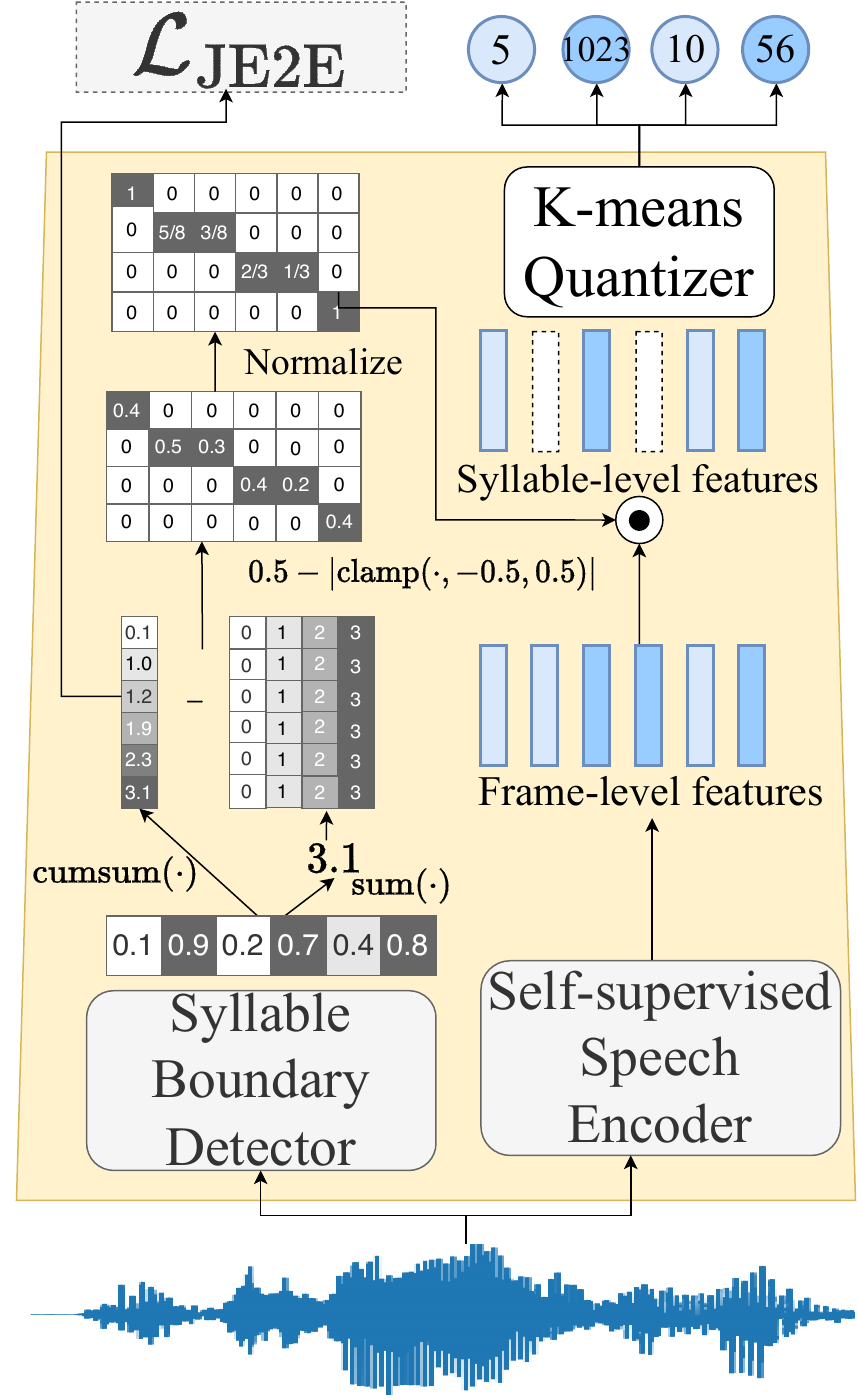}
    \caption{Speech syllabifier}
    \label{fig:sylcipher_c}    
    \end{subfigure}
    \caption{\textbf{Overall architecture of \sylcipher/}. Gray boxes are fixed during training. (a) MLM-based stages: learn a compressed joint semantic space with a shared encoder and random mix-up. (b) PUSM stage: align speech and text spaces by matching lower-order marginals of their distributions.} 
    \label{fig:sylcipher}
    \figvsp
    \figvsp
\end{figure*}

A natural step toward language-universal speech technology is to build recognizers from \emph{unpaired} speech and text, i.e., \emph{unsupervised speech recognition} (UASR)~\citep{Glass2012-unsup-speech,Baevski2021-wav2vec-u,Liu2023-wav2vecu2,Tseng2024-reborn}. 
UASR is important not only for speech recognition itself, but also for downstream applications such as speech synthesis~\citep{Ni-unsuptts-interspeech2022,Liu2022-utts}, translation~\citep{wang-etal-2023-simple}, and spoken language understanding~\citep{shi2023-unsupslu}. More broadly, it is a core instance of \emph{unpaired multimodal learning}~\citep{Artetxe18-unsupnmt,artetxe2018unsupervised,artetxe-etal-2019-effective,Lample2018-unsupmtmono,lample2018phrase,ma2019-unpaired,hoshen2018non,goldwasser2023theory,levy2025unsupervised}, where modalities need to be aligned without parallel data. In the absence of sentence-level alignment, the model must infer higher-level linguistic units--phones, syllables, and words--from raw speech waveforms in conjunction with global text statistics. Thus, UASR provides broader insights into representation learning and multimodal alignment without supervision.

Most strong UASR systems~\citep{Baevski2021-wav2vec-u,Liu2023-wav2vecu2,Tseng2024-reborn} operate at the \emph{phoneme-level}. This requires converting written symbols (graphemes) into phonemes using a grapheme-to-phoneme convertor (G2P). In many languages, however, G2P is difficult or unavailable because writing systems may omit key phonetic information, vary substantially across regions, or fail to represent important sound contrasts~\cite{KandybowiczTorrence2017,FarukhVulchanova2014,KhanEtAl2024PashtoNLP}. Even when pronunciation dictionaries exist, building and maintaining them is labor-intensive. Without a G2P, phoneme-based UASR often degrades sharply because speech and raw text become poorly aligned~\cite{Liu2023-wav2vecu2,Ni-unsuptts-interspeech2022}. In addition, phone boundaries can be hard to detect in languages with strong co-articulation, such as Mandarin.

A word-level alternative avoids G2P, but introduces a different problem: the vocabulary of rare words is effectively unbounded, making coverage and generalization much harder. Word segmentation also depends on longer-range context, which can destabilize the segmentation mechanisms used in current UASR systems~\cite{wang-etal-2023-unsup-speech2sign}.

In this work, we instead study UASR at the \emph{syllable-level}. This choice has three advantages. First, unlike words, the number of syllables is finite, which reduces the long-tail token problem. Second, in many languages, speech and text align more naturally at the syllable level than at the phone or word level. 
Mandarin is a clear example: written characters often correspond closely to spoken syllables, making syllables a more natural unit than phonemes. 
Third, recent progress in syllable boundary detection and unit discovery~\citep{Baade2025-syllablelm,Cho2025-sylber} makes it possible to segment speech into syllable-like units without text supervision, often more reliably than unsupervised word segmentation methods while retaining much of the relevant linguistic structure~\cite{peng2022-vghubert,fuchs2023-gradseg}.

In this paper, we make the following contributions.
\begin{enumerate}[leftmargin=2em, labelsep=0.5em, itemindent=0.0em]
    \item We introduce \textbf{\sylcipher/}, to our knowledge the \textbf{first syllable-based UASR system}. \sylcipher/ jointly predicts syllable boundaries and discrete speech units from raw speech under a unified self-supervised objective, avoiding adversarial training and improving training stability. 
    \item We evaluate \sylcipher/ across domains, languages and level of resources. On LibriSpeech, it improves over prior G2P-free UASR systems by up to 40\% relative character error rate (CER). On SpokenCOCO, the gains are even larger, demonstrating robustness across domains. On Multilingual LibriSpeech,  we build the \textbf{first multilingual UASR system} to our knowledge by training \sylcipher/ on multilingual data and outperform prior G2P-free, monolingual approaches by up to 50\% relative CER. On Mandarin, it achieves 12.2\% phone error rate (PER), outperforming GAN-based UASR methods that fail to even converge. On two low-resource languages, it consistently outperforms phoneme-level approaches.
    \item We also apply \sylcipher/ to the \textbf{semi-supervised setting} and demonstrate that pseudo-labels from our system can improve semi-supervised ASR performance across multiple languages with up to 20 hours of paired data.
    \item We perform careful ablation and error analysis, examining the effects of token vocabulary size, syllabifier choice, and segmentation mechanisms. In particular, we  evaluate \textbf{BPE units as a text-derived, noisier approximation} to syllable-like segmental units. These experiments show that the proposed framework can be extended beyond exact syllable inventories, while also highlighting that linguistically or acoustically motivated syllable units provide a stronger inductive bias for UASR.
\end{enumerate}
\textbf{Paper organizations.} Section~\ref{sec:related_work} reviews related work. Section~\ref{sec:formulation} formalizes the syllable-level UASR problem. Section~\ref{sec:method} presents \sylcipher/ and its theoretical guarantees. Section~\ref{sec:exp} reports experiments and ablations. Section~\ref{sec:concl} concludes with limitations and future work. 

%% file: sections/related_work.tex
\paragraph{Unsupervised speech recognition} Early work on UASR assumed the existence of a reliable G2P and formulate the problem as an adversarial game, where a conditional generator predicts a phoneme sequence given a speech waveform, and a discriminator tries to tell real phonemized text apart from the generator's output~\citep{wang-etal-2023-unsupasr-theory}. Within this framework, several works have explored different segmentation mechanisms such as fixed unsupervised phoneme segmentation~\citep{Liu2018-asru}, iterative forced alignment~\citep{Chen2019-uasr}, de-duplication~\cite{Baevski2021-wav2vec-u} and reinforcement learning~\citep{Tseng2024-reborn}. \citep{wang-etal-2023-unsup-speech2sign} proposed an alternative formulation of UASR as an explicit distribution matching problem, by matching the lower-order $N$-gram distributions of the generated and real texts. Later work further extended this reformulation to use G2P-free tokens such as words~\citep{wang-etal-2023-unsup-speech2sign,Ni2025-jstti}, more expressive architecture such as transformers~\citep{Ni2025-jstti} and more general text distribution such as masked prediction probabilities~\citep{Ni2025-jstti}. We adapt this word-level UASR approach to the syllable-level, and supplement it with a simplified training flow, a more stable differentiable boundary detector and additional learning objectives.
\paragraph{Syllable-level self-supervised learning} Early work on syllable-level modeling of speech used signal processing techniques~\citep{Zhang2009-syllable}. To discover higher-level structure and more efficient self-supervised learning (SSL) representations from raw speech, \citep{Peng2023-syllable,cho2023-sdhubert} proposed to induce syllabic structure from existing SSL models such as HuBERT~\citep{Hsu2022-hubert}. To this end,~\citep{Peng2023-syllable} probed the self-attention layers of VG-HuBERT~\citep{peng2022-vghubert}, a visually grounded SSL model to detect syllable-like feature clusters and further refine such clusters using a \mincut/ algorithm~\cite{shi1997normalized}. To sidestep the need for visual data, \citep{cho2023-sdhubert} employed a speech-only SSL model trained with utterance-level self-distillation from a HuBERT teacher. Due to the indirect manner of such approaches by which syllabic structures are derived, they are often noisy and unreliable. To cope with this issue, recent works~\citep{Baade2025-syllablelm,Cho2025-sylber} proposed a more direct and targeted approach by performing self-distillation at the syllable-level, which significantly improved syllable boundary detection and unit discovery performance by encouraging sharper contrast between within and between-syllable feature frames. 

%% file: sections/problem_formulation.tex
In this section, we formulate syllable-level UASR. 
Let $X=[X_1,\cdots,X_T]\in\gX^T$ be a padded sequence of speech feature vectors and $Y=[Y_1,\cdots,Y_L]\in\gY^L$ be a padded sequence of text tokens in the same language. Since tokenized speech is typically longer than its text transcription, we assume $T\geq L$. We further assume $X$ and $Y$ come from two \emph{unpaired} datasets and are therefore statistically independent. Meanwhile, the two modalities are assumed to be \emph{matched}: there exists an ASR function $y^*:\gX\mapsto\gY$ such that their distributions $p_X$ and $p_Y$ satisfy
\begin{align}
    p_Y(y)=\int_{x\in\gX:y^*(x)=y} p_X(x)\dd x,\,\forall y\in\gY^L. \label{eq:matched}
\end{align}
The goal of UASR is to recover $y^*$ given only unpaired $X$ and $Y$. In the syllable-level setting, the tokens are syllables. The problem is closely related \emph{decipherment}, where one decodes a message written in an unknown script without a lexicon or grammar. 

In practice, $X$ is formed from frame-level SSL features~\cite{Hsu2022-hubert}, and $p_X$ and $p_Y$ are only approximately matched because of finite-sample noise and domain mismatch. Moreover, UASR is ill-posed in general. However,~\citep{wang-etal-2023-unsupasr-theory} show that the mapping $y^*$ in \eqref{eq:matched} becomes identifiable when syllable boundaries are known and the language satisfies mild conditions.

%% file: sections/method.tex
In this section, we present \textbf{SylCipher}, a syllable-level framework for UASR. The model first converts continuous speech into syllable-like units, then learns a shared representation space in which speech and text are processed by the same encoder. We first describe the architecture and training objective, then give the theoretical motivation, and finally introduce the practical training modifications.

At a high level, \sylcipher/ has three goals:
\begin{itemize}[leftmargin=2em, labelsep=0.5em, itemindent=0.0em]
    \item Convert speech into syllable-level units that are easier to align with text;
    \item Learn a shared latent space for speech and text with a single encoder;
    \item Make UASR training stable and effective in practice.
\end{itemize}

\subsecvsp
\subsection{Training: UASR via information compression}
\subsecvsp
As shown in Figure~\ref{fig:sylcipher}, SylCipher is an encoder-only language model with a \emph{shared encoder} for speech and text. Two uni-modal, linear \emph{\prenet/s} map the two modalities into a common embedding space: $e_{\tilde{X}}:\gX^T\mapsto\sR^{L\times d}$ for speech and $e_Y:\gY^L\mapsto\sR^{L\times d}$ for text. 

Before the speech \prenet/, a \emph{speech syllabifier} converts frame-level acoustic features into a syllable-level sequence. It contains (i) a \emph{differentiable soft-pooler} $m:\gX^T\mapsto\gX^L$ that aggregates frames into syllable segments and (ii) a tokenizer $c:\gX^L\mapsto \tilde{\gX}^L$ that discretizes the pooled representations. The speech pipeline is therefore 
\begin{align}
    e_{\tilde{X}}(\tilde{X}) = e_{\tilde{X}}\circ c\circ m(X),
\end{align}
where $\tilde{X}:=c\circ m(X)\in\tilde{\gX}^L.$

The soft-pooler first employs a trainable boundary detector to predict frame-level boundary probabilities $b(X)\in[0, 1]^T$. The boundary detector is pretrained with initial boundaries $[\hat{b}_1(X),\cdots,\hat{b}_T(X)]$ from the unsupervised syllable detector Sylber-\citep{Cho2025-sylber}. These frame-level boundary probabilities are converted into a soft pooling mask $a(X)$, which maps frame-level speech features into segment-level representations: 
\begin{align}\label{eq:softpooler}
    a_{it}(X) &:= \sigma_{\epsilon}\brac{i-\sum_{\tau\leq t}b_{\tau}(X)},\\
    m_i(X) &:= \sum_{t=1}^T \frac{a_{it}(X)}{\sum_{\tau=1}^T a_{i\tau}(X)}X_t, 
\end{align}
where $\sigma_{\epsilon}(x):=\epsilon - |\mathrm{clamp}(x,-\epsilon,\epsilon)|$. 
This produces sparse pooling weights and avoids the numerical instability of earlier soft-pooling methods~\citep{Bhati2022-scpc,wang2024unsupervised}. The segment-level representations are passed to a vector quantizer and speech \prenet/.

Both modalities are then processed by the same encoder $f$:
\begin{align}
    f_{\tilde{X}}(\tilde{X}) := f\circ e_{\tilde{X}}(\tilde{X}),\quad f_Y(Y) := f\circ e_Y(Y).
\end{align}
In practice, $c$ is implemented with a differentiable \km/ vector quantizer~\citep{Ni2025-jstti}, and $f$ is a multi-layer transformer~\citep{Vaswani2017}. 

\paragraph{Distribution matching.} 
Without paired speech-text data, we train the model by matching the unimodal speech and text distributions. Let $g_{\tilde{X}}$ and $g_Y$ be speech and text \emph{\postnet/s} built on top of the shared encoder, with
\begin{align}\label{eq:postnet}
    q_{Z}(Z) &:= g_Z\circ f_Z(Z)=\prod_{i=1}^L g_{Z,i}(Z_i|f_Z(Z_{1:(i-1)}))\nonumber\\
    &=\prod_{i=1}^{L}\frac{e^{W^Z_{iZ_i}f_Z(Z_{1:(i-1)})}}{\sum_{z}e^{W^Z_{iz}f_Z(Z_{1:(i-1)})}},
\end{align}
where $Z\in \{X, Y\}$, $Z_i$ is the syllable unit at segment $i$ and $g_{Z,i}$ is the linear-softmax post-net prediction head, following JSTTI~\cite{Ni2025-jstti}. Thus, $q_Z$ defines a sequence-level distribution computed as the product of the softmax probabilities.

Training minimizes \emph{Kullback–Leibler (KL) divergence} to the two unimodal distributions:
\begin{align}\label{eq:regularized_distribution_matching}    \min_{q_{\tilde{X}},q_Y}\KL(p_{\tilde{X}}||q_{\tilde{X}})+\KL(p_Y||q_Y),
\end{align}
while \textbf{restricting the capacity} of the shared representation to prevent speech and text from drifting into disjoint regions of the latent space. We prove that Eq.~\ref{eq:regularized_distribution_matching} is able to match the true and generated text distributions under regularity conditions~\citep{wang-etal-2023-unsupasr-theory}, and defer the formal objective, assumptions, and theoretical guarantee to Appendix~\ref{app:proof_of_main} and ~\ref{app:assumption_discussion}.

\paragraph{Practical training.} In practice, \sylcipher/ uses three training components. First, 
it applies weighted masked language modeling (wMLM) on both speech and text to learn the shared encoder and \postnet/s to approximate the KL terms in Eq.~\ref{eq:regularized_distribution_matching}. Given a mask distribution $p_M$, we minimize the weighted loss:
\begin{multline*}\label{eq:mlm}
    \Ls_{\mathrm{wMLM}}(\theta_X,\theta_Y):=\\
    -\E_{Z\sim \frac{1}{2}p_{\tilde{X}}+\frac{1}{2}p_Y,\,M\sim p_M}\sbrac{\ln q_Z(Z^M|Z^{M^c})}\\
    -\lambda \E_{Z\sim \frac{1}{2}p_{\tilde{X}}+\frac{1}{2}p_Y,\,M\sim p_M}[\ln q_Z(Z^M|Z^M)],
\end{multline*}
where $Z^M$ is masked portion of the embedding, $Z^{M^c}$ is the unmasked portion of the embedding, $\theta_X$ and $\theta_Y$ are speech-related and text-related parameters, and $\lambda$
balances the function of the decoder for mask prediction and reconstruction. Note that for $\lambda=0$, Eq.~\ref{eq:mlm} approximates Eq.~\ref{eq:regularized_distribution_matching} since
\begin{align*}
&\KL(p_{Z}||q_{Z})\propto -\E_{p_{Z}}\ln q_{Z}(Z)\\
\propto&-\E_{p_Z,p_M}\ln q_Z(Z^M|Z^{M^c})=\Ls_{\mathrm{wMLM}},
\end{align*}
where $p_Z$ and $q_Z$ denote the true and estimated unimodal distributions respectively. 

Second, since the computations Eq.~\ref{eq:softpooler} is differentiable, the boundary detector is updated jointly with the UASR objectives, allowing joint end-to-end (JE2E) training to refine the boundaries using text-distribution matching signals. To discourage both over- and under-segmentation, a soft syllable-count constraint. 
\begin{align*}
    \Ls_{\mathrm{JE2E}}(\theta_X) := \E_{X}\left|\sum_{t=1}^T b_t(X)-\hat{b}_t(X)\right|.
\end{align*}
Third, once MLM saturates, it adds positional unigram and skipgram matching (PUSM)~\citep{wang2024unsupervised,Ni2025-jstti} for explicit distribution matching.

The overall training objective is
\begin{align}
    \Ls := 
    \lambda_1\Ls_{\mathrm{wMLM}}+\lambda_2\Ls_{\mathrm{JE2E}}+\lambda_3\Ls_{\mathrm{PUSM}}.
\end{align}
Because the PUSM stage requires different batch sizes, training follows a simple iterative schedule: (1) train with fixed boundaries and wMLM only. (2) enable JE2E to refine segmentation, and (3) switch to PUSM for explicit distribution matching. To restrict the capacity of the shared representation during training, we use a shallow encoder and partial quantization of encoder states before the \postnet/s with random mix-up~\citep{Ao2022-speecht5}.

\subsecvsp
\subsection{Inference}
\subsecvsp
During inference, the ASR system cascades the speech syllabifier, the speech pre-net, and the text post-net:
\begin{align}
    y(X)&:=\argmax_{Y\in\gY^L}\; q_Y(Y|X)\nonumber\\
    &:=\argmax_{Y\in \gY^L}\;(g_Y\circ f_X(X))(Y).
\end{align}
We also find that replacing each \texttt{<OOV>} token with the second most likely prediction reduces CER by about $1\%$ relative to simply discarding \texttt{<OOV>}s.

%% file: sections/experiments.tex
We first describe the datasets and preprocessing, then present the main UASR, boundary-detection, semi-supervised, and ablation results. Additional implementation details are deferred to Appendix~\ref{app:implementation}.

\subsecvsp
\subsection{Datasets}\label{sec:dataset}
We evaluate \sylcipher/ on six datasets: LibriSpeech (460 hours of clean read English)~\citep{Panayotov15-LibriSpeech}, SpokenCOCO (742 hours of spoken image captions)~\citep{hsu2021textfree}, Mandarin \aishell/ (85 hours)~\citep{shi21c_interspeech}, multilingual LibriSpeech (MLS) for German, Dutch and French~\citep{Pratap2020MLSAL}, and a low-resource Cantonese-Taigi setting built from MDCC~\citep{yu2022automatic} and SuiSiann~\citep{ithuan2019suisiann}. For MLS, we subsample 100 hours per language following prior work~\citep{Tseng2024-reborn}, append a language ID token to each tokens as well as the start of each speech and text unit sequences, train the MLM stage jointly, and run the PUSM stage monolingually.  

We follow the LibriSpeech split of \citep{Ni2025-jstti}, and the standard splits elsewhere. For LibriSpeech and SpokenCOCO, we report both a \emph{matched} setting, where empirical speech and text \pdf/s can be matched exactly, and an \emph{unmatched} setting, where they cannot. 
In the matched case, pairings are removed from paired corpora; in the unmatched case, LibriSpeech uses LibriLM~\citep{panayotov2015librispeech} with overlapping text removed, while SpokenCOCO is split into speech-only and text-only halves. For the remaining datasets, we report only the matched setting. We apply a voice activity detector\footnote{https://github.com/wiseman/py-webrtcvad.git} to all datasets.

\subsecvsp
\subsection{Speech and text syllabification}\label{sec:preproc}
For English text, we use \pyphen/\footnote{https://github.com/Kozea/Pyphen}, a rule-based hyphenation tool repurposed for syllabification without a G2P, together with a simple fallback for long words that remain unsplit (Appendix~\ref{app:code_pyphen+}); We refer to this combination as \pyphen/+. We also test other G2P-free approaches such as byte-pair encoding (BPE)~\cite{liu-etal-2025-superbpe,sennrich-etal-2016-neural}, and find \sylcipher/ robust to syllabification noise. For English, we keep only the top-2048 most frequent English syllables and replace the rest with a special \texttt{<OOV>} token (replacing $7\%$ of tokens in LibriSpeech and $2\%$ in SpokenCOCO). Similar syllabification procedure is used for MLS but with top-1024 syllables per language. For Mandarin, we use the pinyin or raw character as syllables. We keep the top-1024 types, covering 99.5\% of occurrences in the case of pinyin and 92\% in the case of raw character. Similar procedure is used for Cantonese and Taigi. For speech, we cluster syllable-level features obtained by mean pooling within Syllable boundaries, using a codebook size equal to the number of non-\texttt{<OOV>} text tokens.
\begin{table}[t]
    \begin{subtable}{0.49\textwidth}
    \centering
    \resizebox{0.95\textwidth}{!}{
    \begin{tabular}{llcccc}
    \toprule
        \multirow{2}{*}{\textbf{Model}} &  \multirow{2}{*}{\textbf{Student}} & \multirow{2}{*}{\textbf{Token}} & \multicolumn{1}{c}{\textbf{Matched}} & \multicolumn{1}{c}{\textbf{Unmatched}} \\
        & & & \textbf{CER ($\downarrow$)} 
        & \textbf{CER ($\downarrow$)} 
        \\
    \midrule
    \midrule   
    \multicolumn{5}{c}{\emph{G2P-based approach}} \\
    \midrule
    \multirow{1}{*}{\wtovu/*~\citep{Baevski2021-wav2vec-u}} & No & Phone & - & 13.3\\
    \multirow{1}{*}{\wtovu/ 2.0*~\citep{Liu2023-wav2vecu2}} & No & Phone & - & 12.2\\
    \multirow{1}{*}{\reborn/*~\citep{Tseng2024-reborn}} & No & Phone & - & 8.3 \\
    \midrule
    \multicolumn{5}{c}{\emph{G2P-free approach}} \\
    \midrule
        \multirow{2}{*}{\wtovu/~\citep{Baevski2021-wav2vec-u}}  & No & Char. & 35.6 & 43.3 \\
        & Yes & Char. & 33.8 & 42.1 \\
        \multirow{1}{*}{\reborn/~\citep{Tseng2024-reborn}} & No & Char. & 37.8 & 76.6\\
        JSTTI (forced align) & No & Char & 81.4 & 81.1\\
        \multirow{1}{*}{JSTTI~\citep{Ni2025-jstti}} & No & Word & 49.5 & 54.2\\
        \multirow{2}{*}{PUSM~\cite{wang-etal-2023-unsup-speech2sign}} & No & Syllable & 35.5 & 57.7 \\
        & Yes & Syllable & 33.0 & 54.7\\
        \midrule
        \sylcipher/ (Ours, forced align) & No & Syllable & 38.5 & 46.4 \\
        \sylcipher/ (Ours, Sylber) & No & Syllable & 43.5 & 48.6\\
        \sylcipher/ (Ours, Sylber+JE2E) & No & Syllable & 39.2 & 46.8 \\
        \multirow{2}{*}{\sylcipher/ (Ours, Sylber+JE2E+PUSM)} & No & Syllable & \textbf{21.8} & \textbf{35.9} & \\
        & Yes & Syllable & \textbf{17.5} & \textbf{33.3}\\
    \bottomrule
    \end{tabular}}
    \caption{UASR results on LibriSpeech (clean subsets)}
    \label{tab:uasr_libri}
    \end{subtable}
    \begin{subtable}{0.49\textwidth}
    \centering
    \resizebox{0.9\textwidth}{!}{
    \begin{tabular}{llcccc}
    \toprule
        \multirow{2}{*}{\textbf{Model}} &  \multirow{2}{*}{\textbf{Student}} & \multirow{2}{*}{\textbf{Token}} & \multicolumn{1}{c}{\textbf{Matched}} & \multicolumn{1}{c}{\textbf{Unmatched}} \\
        & & & \textbf{CER ($\downarrow$)} 
        & \textbf{CER ($\downarrow$)} 
        \\
    \midrule
    \midrule
        \multirow{2}{*}{\wtovu/~\citep{Baevski2021-wav2vec-u}}  & No & Char. & 45.0 & 45.2 \\
        & Yes & Char. & 46.3 & 35.3\\
        
        JSTTI & No & Char. & 78.3 & 100 \\
        \multirow{1}{*}{JSTTI~\citep{Ni2025-jstti}} & No & Word & 64.5 & 64.5\\
        \multirow{2}{*}{PUSM~\cite{wang-etal-2023-unsup-speech2sign}} & No & Syllable & 41.5 & 41.3 \\
        & Yes & Syllable & 34.7 & 34.3 \\
        \midrule
        \sylcipher/ (Ours, Sylber) & No & Syllable & 34.9 & 36.1 \\
        \sylcipher/ (Ours, Sylber+JE2E) & No & Syllable & 31.2 & 32.4 \\
        \multirow{2}{*}{\sylcipher/ (Ours, Sylber+JE2E+PUSM)} & No & Syllable & \textbf{23.4} & \textbf{26.8} \\
        & Yes & Syllable & \textbf{13.9} & \textbf{17.6}\\
    \bottomrule
    \end{tabular}}
    \caption{UASR results on SpokenCOCO}
    \label{tab:uasr_scoco}
    \end{subtable}
    \figvsp
    \caption{\textbf{UASR results on LibriSpeech (clean subsets) and SpokenCOCO}. Inside the bracket lists the unsupervised boundary used for each model. \emph{Student} stands for the student model used during the self-training stage using pseudo-labels from each model. For tokens used for the text data, \emph{Char.} stands for characters and \emph{Syllable} stands for syllable-level tokens converted using the Pyphen+ syllabifier without using a G2P. CER stands for character error rate. * indicates evaluation on LibriSpeech dev-clean instead.}
\end{table}

\begin{table}[t]
    \centering
    \resizebox{0.49\textwidth}{!}{
    \begin{tabular}{llcccc}
    \toprule
        \multirow{2}{*}{\textbf{Model}} &  \multirow{2}{*}{\textbf{Student}} & \multirow{2}{*}{\textbf{Token}} & \multicolumn{1}{c}{\textbf{w/o Tone}} & \multicolumn{1}{c}{\textbf{w. Tone}} \\
        & & & \textbf{PER ($\downarrow$)} 
        & \textbf{PER ($\downarrow$)} 
        \\
    \midrule
    \midrule
        \multirow{1}{*}{\wtovu/~\citep{Baevski2021-wav2vec-u}}  & No & Phone & 74.9 & 76.2 \\
        JSTTI & No & Init./Final & 96.4 & 100\\
        \multirow{1}{*}{JSTTI~\citep{Ni2025-jstti}} & No & Word & 83.2 & 169\\
        \multirow{2}{*}{PUSM~\citep{wang-etal-2023-unsup-speech2sign}} & No & Syllable & 28.4 & 26.5 \\
        & Yes & Syllable & 18.5 & 14.9\\
        \midrule
        \sylcipher/ (Ours, forced align) & No & Syllable & 38.1 & 38.9\\
        \sylcipher/ (Ours, Sylber) & No & Syllable & 44.6 & 48.3 \\
        \sylcipher/ (Ours, Sylber+JE2E) & No & Syllable & 41.7 & 45.1 \\
        \multirow{2}{*}{\sylcipher/ (Ours, Sylber+JE2E+PUSM)} & No & Syllable & \textbf{26.9} & \textbf{24.9} \\
        & Yes & Syllable & \textbf{15.3} & \textbf{12.2} \\
        \multirow{1}{*}{\sylcipher/ (Ours, Sylber+JE2E+PUSM)} & No & Char. & 31.9 & 29.9 \\
    \bottomrule
    \end{tabular}}
    \figvsp
    \caption{\textbf{UASR results on AISHELL-3 test set}. Inside the bracket lists the unsupervised boundary used for each model. ``Student'' stands for the student model used during the self-training stage using pseudo-labels from each model. \emph{Init./Final} refers to initials and finals in the Chinese phonetic alphabet. For text data tokens, \emph{Syllable} refers to the pinyin representation of each Chinese character, and \emph{Phone} denotes the individual letters within the pinyin tokens, and \emph{Char.} stands for the raw Chinese character. PER stands for phone error rate.}
    \label{tab:uasr_aishell3}
\end{table}

\begin{table}[t]
    \centering
    \resizebox{0.49\textwidth}{!}{
    \begin{tabular}{lccccc}
    \toprule
        \multirow{1}{*}{\textbf{Model}} & \multirow{1}{*}{\textbf{Token}} & \textbf{de ($\downarrow$)} 
        & \textbf{nl ($\downarrow$)} & 
        \textbf{fr ($\downarrow$)} \\
        \toprule
        \multirow{1}{*}{\wtovu/~\citep{Baevski2021-wav2vec-u}} & Phone & 32.5 & 40.2 & 39.8\\
        \multirow{1}{*}{\wtovu/~\citep{Baevski2021-wav2vec-u}} & Char. & 42.9 & 50.6 & 72.6\\
        PUSM~\citep{wang-etal-2023-unsup-speech2sign} & Syllable & 35.1 & 38.2 & 43.6 \\
        \midrule
        SylCipher (Ours, Sylber) & Syllable & \textbf{38.2} & \textbf{40.9} & \textbf{46.3}\\
        SylCipher (Ours, Sylber+PUSM) & Syllable & \textbf{32.3} & \textbf{35.6} & \textbf{41.7}\\
        \bottomrule
    \end{tabular}}
    \caption{\textbf{Multilingual UASR results on MLS}. \emph{de}, \emph{nl} and \emph{fr} stand for German, Dutch and French respectively. Performance is measured by phone/character error rate (lower is better), depending on the type of token used.}
    \label{tab:mls}
\end{table}

\begin{table}[t]
    \centering
    \figvsp
    \resizebox{0.49\textwidth}{!}{
    \begin{tabular}{llccccc}
    \toprule
        \multirow{1}{*}{\textbf{Model}} &  \multirow{1}{*}{\textbf{Student}} & \multirow{1}{*}{\textbf{Token}} & \multicolumn{1}{c}{\textbf{Cantonese ($\downarrow$)}} & \multirow{1}{*}{\textbf{Taigi ($\downarrow$)}} \\ 
    \midrule
    \midrule
        \multirow{1}{*}{\wtovu/~\citep{Baevski2021-wav2vec-u}}  & No & Phone & 75.7  & 69.0\\ 
        PUSM~\cite{wang-etal-2023-unsup-speech2sign} & No & Syllable & 36.6 & 37.3\\
        \midrule
        \sylcipher/ (Ours, PUSM+MLM) & No & Syllable & 48.8 & 58.4 \\
        \sylcipher/ (Ours, PUSM+MLM+JE2E) & No & Syllable & 37.7 & 38.6\\
        \multirow{2}{*}{\sylcipher/ (Ours, PUSM+MLM+JE2E+PUSM)} & No & Syllable & \textbf{35.0} & \textbf{35.5} \\
        & Yes & Syllable & \textbf{23.5}  & \textbf{26.5}\\
        \multirow{1}{*}{\sylcipher/ (Ours, PUSM+MLM+JE2E+PUSM)} & No & Char. & 46.9 & - \\
    \bottomrule
    \end{tabular}}
    \figvsp
    \caption{\textbf{Multilingual UASR results on two low-resource languages, Cantonese (MDCC) and Taigi (SuiSiann)}.\liming{Multilingual vs. monolingual result} Inside the bracket lists the unsupervised boundary used for each model. ``Student'' stands for the student model used during the self-training stage using pseudo-labels from each model. \emph{Syllable} refers to the JyutPing (or analogous notation for Taigi) representation of each Cantonese/Taigi character, and \emph{Phone} denotes the individual letters within the JyutPing tokens, while \emph{Char.} stands for the raw characters of the languages. Performance are measured by phone error rate (lower is better). \liming{TODO Update results}}
    \label{tab:uasr_low_resource}
    \figvsp
\end{table}

\subsecvsp
\subsection{Results: UASR}\label{sec:uasr_result}
\subsecvsp
We compare \sylcipher/ with word-, character-, and syllable-level UASR systems that differ in tokenization, objective, and architecture: word-level JSTTI~\citep{Ni2025-jstti}; character-level \wtovu/~\citep{Baevski2021-wav2vec-u}, \reborn/~\citep{Tseng2024-reborn}, and a phone-level JSTTI variant; and syllable-level PUSM~\citep{wang-etal-2023-unsup-speech2sign}. We also report \emph{self-training}~\cite{Chen2019-uasr,Baevski2021-wav2vec-u}, where a \wtov/~\cite{Baevski2020-wav2vec2} student is finetuned on pseudo-labels from each UASR system. We use CER for European languages and PER for Sinitic languages, both are tokenization-independent and easily comparable. 

\noindent\textbf{Syllable-level modeling performs best under G2P-free setting.} Table~\ref{tab:uasr_libri} summarizes results on LibriSpeech. Among baselines, PUSM performs best in the matched setting, while \wtovu/ is strongest in the unmatched setting. \reborn/ trained directly on characters degrades substantially, suggesting sensitivity to speech-text misalignment. \sylcipher/ with all three stages (Sylber+JE2E+PUSM) achieves 21.8\% CER (matched) and 35.9\% (unmatched), outperforming all baselines. Relative to \wtovu/ (35.6\%/43.3\%), this is a 40\%/17\% CER reduction. 
Syllable-level models consistently outperform word- and character-level systems, supporting the view that syllables are the most effective alignment unit. Across stages, JE2E brings modest gains, while PUSM yields the largest improvement (44\% and 23\% relative over JE2E in matched and unmatched settings), and
self-training further reduces CER by 20\% relative. Unsupervised Sylber boundaries also perform nearly as well as forced alignment.

\noindent

\noindent\textbf{Syllables are robust to domain shifts.} Table~\ref{tab:uasr_scoco} reports results on SpokenCOCO.  \sylcipher/ remains strongest under domain shift, outperforming PUSM by 32\% and \wtovu/ by 49\% relative CER after all stages; even the early stages already beat \wtovu/ by 22\%. The gap is larger than on LibriSpeech, suggesting that syllable units are more robust to domain shifts for English. Self-training yields a further 39\% relative CER reduction, and the small matched/unmatched gap suggests that the sharper LibriSpeech degradation mainly comes from speech-text domain mismatch.

\noindent\textbf{Syllable-level UASR works under multilingual setting.} Table~\ref{tab:mls} show that multilingual training on related languages is effective: on MLS, \sylcipher/ improves over strong monolingual G2P-free baselines by 25-50\% relative to \wtovu/ and 5-8\% relative to PUSM, and even beats the G2P-based baseline in Dutch by 12\%.

\noindent\textbf{Syllable-level UASR works across (low-resource) languages.} Table~\ref{tab:uasr_aishell3} shows that syllable-level UASR is particularly well-suited to Mandarin: syllable models converge even without boundary refinement, whereas phone- and word-level systems struggle. Relative to PUSM, \sylcipher/ improves PER by more than 5.5\% before self-training and 17\% after; adding tone labels does not hurt, and often helps after the PUSM stage. Similar trends appears in Table~\ref{tab:uasr_low_resource} for Cantonese and Taigi, where \sylcipher/ clearly outperforms the GAN-based character baseline and remain competitive even with raw characters. In these extremely low-resource settings, however, we find it necessary to pretrain with PUSM before MLM (more details in Appendix~\ref{app:implementation}), and the gap to PUSM narrows, likely because limited data makes higher-order speech-text matching harder. Multilingual training also stabilizes the MLM stage; for example, Taigi-only training diverges.

\begin{table}[t]
    \centering
    \begin{subtable}{0.49\textwidth}
    \centering
    \resizebox{0.95\textwidth}{!}{
    \begin{tabular}{l|ccccc}
    \toprule
         & \textbf{F1$^{50}$ ($\uparrow$)} & \textbf{F1$^{20}$ ($\uparrow$)} & \textbf{P. ($\uparrow$)} & \textbf{Re. ($\uparrow$)} & \textbf{R ($\uparrow$)} \\
    \midrule
    \midrule
    Feat-Sim~\citep{Peng2023-syllable}  & 47.3 & 24.7 & 46.6 & 48.0 & 54.4 \\
    SDHuBERT~\citep{cho2023-sdhubert} & 66.1 & 32.2 & 64.9 & 67.4 & 70.7 \\
    SylBoost~\citep{Baade2025-syllablelm} & 73.2 & \underline{44.6} & 72.1 & 74.4 & 76.9\\
    Sylber~\citep{Cho2025-sylber} & \underline{83.4} & 44.1 & \underline{84.8} & \underline{84.1} & \underline{86.4}\\
    \midrule
    SylCipher (Ours, Sylber+JE2E) & \textbf{86.1} & \textbf{50.8} &  \textbf{86.6} & \textbf{86.1} & \textbf{88.1} \\
    \bottomrule
    \end{tabular}}
    \caption{Boundary detection results on LibriSpeech}
    \label{tab:syl_bnd_libri}
    \end{subtable}
    
    \begin{subtable}{0.49\textwidth}
    \centering
    \resizebox{0.95\textwidth}{!}{\begin{tabular}{l|ccccc}
    \toprule
         & \textbf{F1$^{50}$ ($\uparrow$)} & \textbf{F1$^{20}$ ($\uparrow$)} & \textbf{P.} ($\uparrow$) & \textbf{Re.} ($\uparrow$) & \textbf{R ($\uparrow$)} \\
    \midrule
    \midrule
    Feat-Sim~\citep{Peng2023-syllable}  & \underline{60.3} & - & 57.4 & \underline{63.6} & \underline{64.3}\\
    SylBoost~\citep{Baade2025-syllablelm} & 55.6 & \underline{30.2} & 48 &  \textbf{65.2} & 50.8 \\
    Sylber~\citep{Cho2025-sylber} & 53.5 & 28.6 & 52.3 & 54.9 & 59.6 \\
    \midrule
    SylCipher (Ours, Sylber+JE2E) & \textbf{62.3} & \textbf{31.0} & \textbf{61.1} & \underline{63.6} & \textbf{67.4}	\\
    \bottomrule
    \end{tabular}}
    \caption{Boundary detection results on SpokenCOCO}
    \figvsp
    \label{tab:syl_bnd_scoco}
    \end{subtable}
    \caption{\textbf{Syllable boundary detection results on LibriSpeech and SpokenCOCO}. The superscript for F1 scores is tolerance threshold in ms and the tolerance is 50ms for other metrics. P., Re. and R stand for precision, recall and R-value respectively. \sylcipher/ is trained under unmatched settings.}
\end{table}

\begin{figure}
    \begin{subfigure}{0.49\textwidth}
        \centering
        \begin{tabular}{l|cc}
            \toprule
           \textbf{Domain} & \textbf{Ratio} & \textbf{CER} \\
           \midrule
            None & 0.0 & 33.0 \\
            News & 0.5 & 39.4 \\
            News & 0.75 & 48.3 \\
            Wikipedia & 0.5 & 42.8 \\
            Wikipedia & 0.75 & 57.1 \\
            Image Caption & 0.25 & 37.0 \\
            Image Caption & 0.5 & 75.6 \\
            \bottomrule
        \end{tabular}
        \caption{\textbf{Effect of speech-text domain mismatch.} A PUSM with XEUS backbone is trained on mixed LibriSpeech text with text from three out-of-domain (OOD) sources: news (from NewsCrawl), Wikipedia and image captions (from MS COCO), using different proportions of OOD text.}
        \label{tab:eff_domain_mismatch}
    \end{subfigure}
    \begin{subfigure}{0.49\textwidth}
        \centering
        \resizebox{0.95\textwidth}{!}{\begin{tabular}{l|cc}
        \toprule
            \textbf{Backbone} & \textbf{LibriSpeech ($\downarrow$)} & \textbf{MLS (german) ($\downarrow$)} \\
            \midrule
            HuBERT & 35.5 & 45.5 \\
            XEUS & \textbf{33.4} & \textbf{35.1} \\
            \bottomrule
        \end{tabular}
        }
        \caption{\textbf{Effect of speech encoder backbone choices.} The model used is a syllable-level PUSM and the evaluation metric used is CER.}
        \label{tab:eff_of_backbone}
    \end{subfigure}
    \caption{\textbf{Ablation studies} on the effect of speech-text domain mismatch and speech encoder backbone choices for syllable-level UASR.}
    \label{fig:ablation_domain_and_backbone}
\end{figure}

\begin{figure*}
    \centering
    \begin{subfigure}{0.32\textwidth}
        \includegraphics[width=0.99\textwidth]{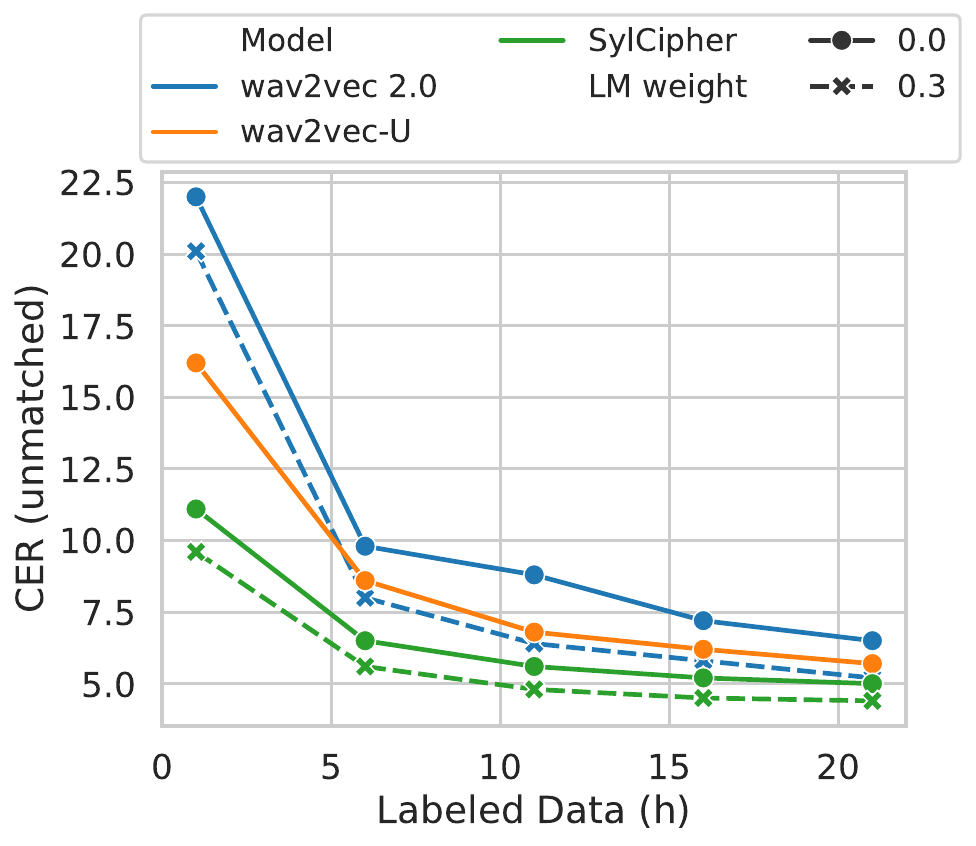}
        \caption{LibriSpeech (English)}
    \end{subfigure}
    \begin{subfigure}{0.32\textwidth}
        \includegraphics[width=0.99\textwidth]{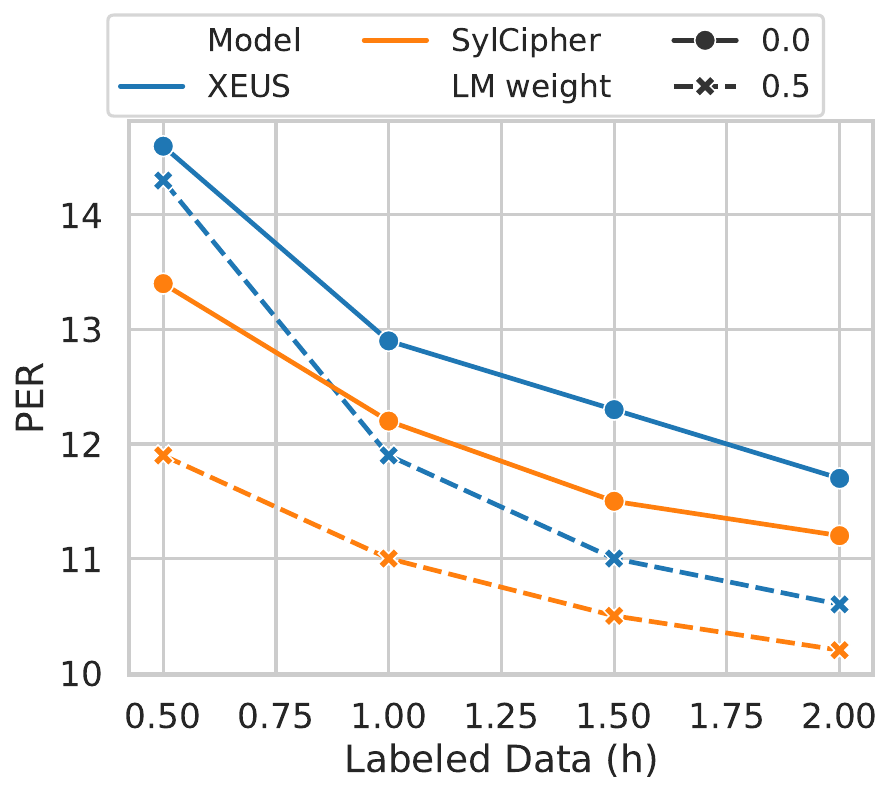}
        \caption{MDCC (Cantonese)}
    \end{subfigure}
    \begin{subfigure}{0.32\textwidth}
        \includegraphics[width=0.99\textwidth]{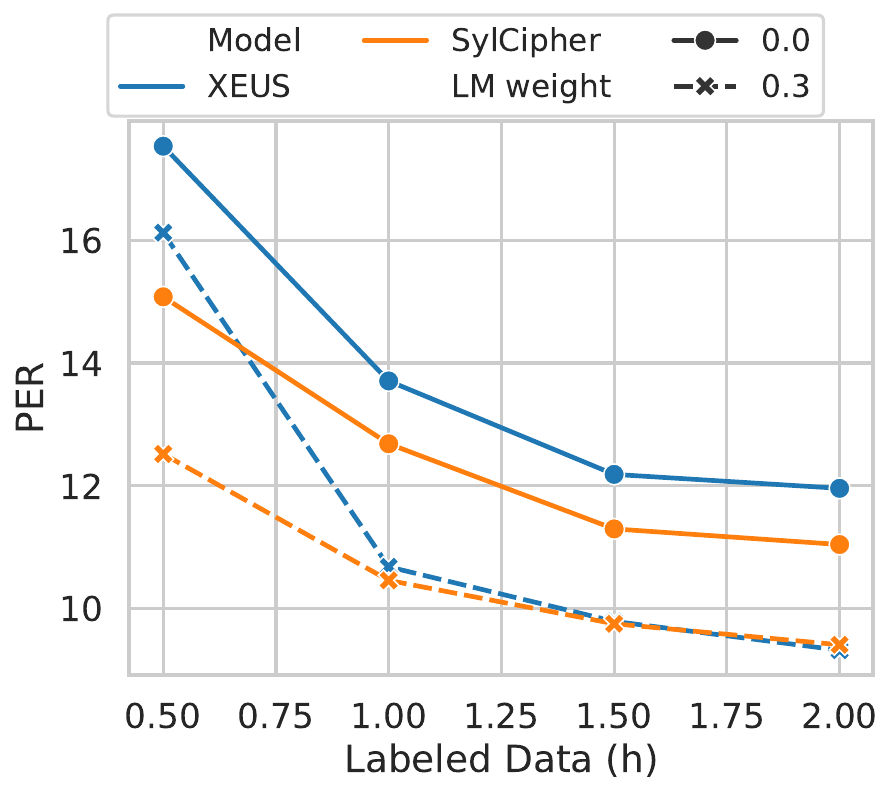}
        \caption{SuiSiann (Taigi)}
    \end{subfigure}
    \caption{\textbf{Semi-supervised ASR results on various spoken languages}. A small subset of labeled data is randomly sampled from the training set to simulate the low-resource setting. ``LM weight'' stands for the fusion weight of a 4-gram language model during beam search. For each dataset, all methods share the same model architecture and hyperparameter settings, except our approach uses an additional unpaired speech-text self-training stage before the supervised finetuning. Lower is better for all metrics.}
    \label{fig:semi_sup}
\end{figure*}
\begin{figure}[t]
\centering
    \begin{subfigure}{0.23\textwidth}
        \centering
        \includegraphics[width=0.99\textwidth]{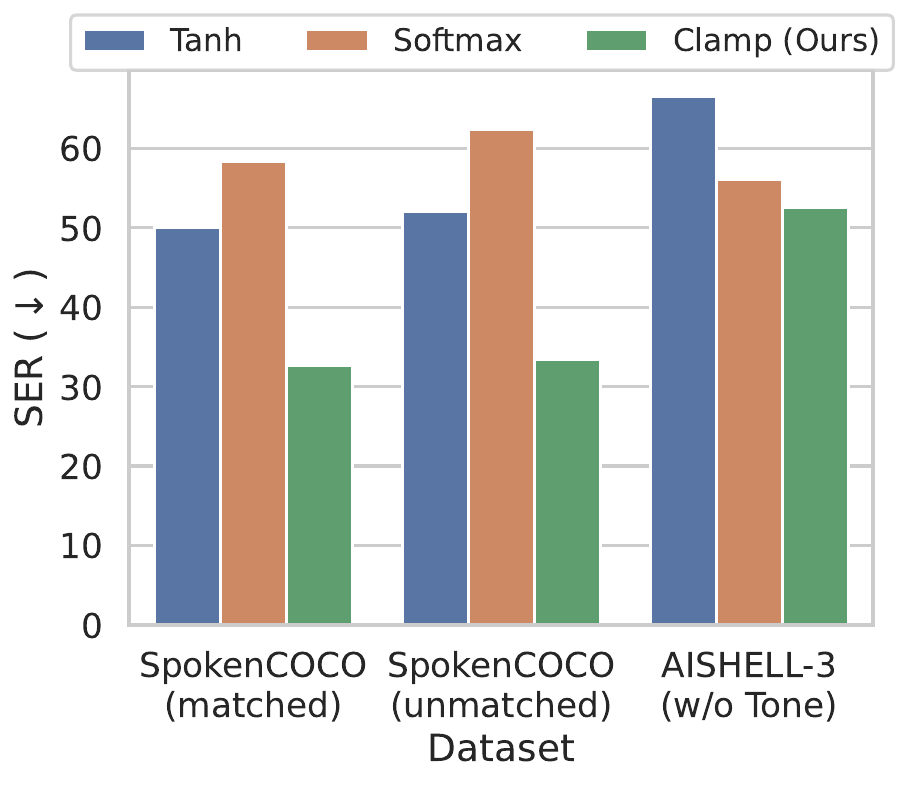}
        \figvsp
        \caption{SER vs. pooler type across datasets}
        \label{fig:ser_vs_pooler}
    \end{subfigure}
    \begin{subfigure}{0.23\textwidth}
        \centering
        \includegraphics[width=0.99\textwidth]{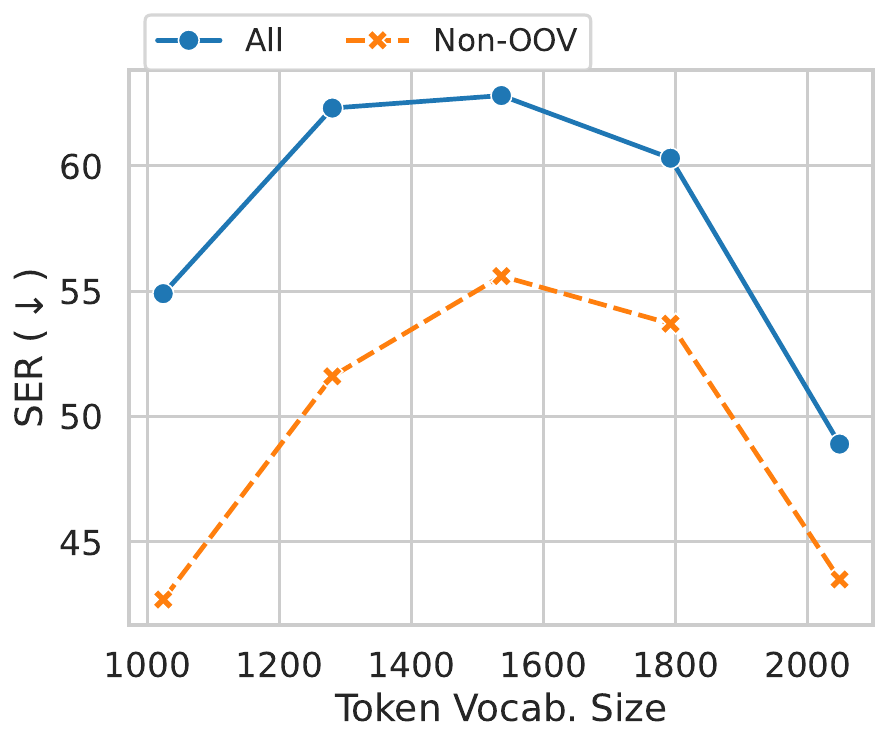}
        \figvsp
        \caption{SER vs. token vocabulary size}
        \label{fig:ser_vs_token_vocab}
    \end{subfigure}

    \begin{subfigure}{0.49\textwidth}
    \centering
    \resizebox{0.95\textwidth}{!}{
    \begin{tabular}{lcccc}
         \toprule
         \textbf{Model} & \multirow{1}{*}{\textbf{Tokenizer}} & \multirow{1}{*}{\textbf{Language}} & \textbf{SER ($\downarrow$)} & \textbf{CER ($\downarrow$)}\\
         \midrule
         \midrule
         \sylcipher/ (Sylber) & \pyphen/+ & English & 48.9 & -\\
         \sylcipher/ (Sylber+JE2E) & \pyphen/+ & English & 44.9 & -\\
         \sylcipher/ (Sylber) & Syllabify & English & 51.3 & -\\
         PUSM & \pyphen/+ & German & 46.3 & 35.1 \\
         PUSM & \pyphen/+ & Dutch & 48.7 & 38.2 \\
         PUSM & \pyphen/+ & French & 52.7 & 43.6 \\
         \midrule
         \sylcipher/ (Sylber) & BPE+ & English & 52.5 & - \\
         \sylcipher/ (Sylber+JE2E) & BPE+ & English & 49.9 & - \\
         PUSM & BPE+ & German & 54.3 & 39.7	 \\
         PUSM & BPE+ & Dutch & 54.4 & 39.8	\\
         PUSM & BPE+ & French & 64.3 & 48.3 \\
         \bottomrule
    \end{tabular}}
    \caption{Effect of syllabifier type}
    \label{tab:ablation}    
    \end{subfigure}
    \caption{\textbf{Ablation studies} on the effect of syllabifier type, soft-pooler design and token vocabulary size on \sylcipher/ UASR performance.}
    \label{fig:ablation}
    \figvsp
\end{figure}

\subsecvsp
\subsection{Results: Unsupervised syllable boundary detection}\label{sec:bd_result}
\subsecvsp
To isolate the effect of JE2E, we compare its boundary predictions against its teacher Sylber~\citep{Cho2025-sylber}  and other unsupervised approaches including Feat-Sim~\citep{Peng2023-syllable}, SDHuBERT~\citep{cho2023-sdhubert}, SylBoost~\citep{Baade2025-syllablelm}. Sylber is the strongest baseline on most LibriSpeech metrics, while SylBoost is strongest on LibriSpeech F1$^{20}$ and three of five SpokenCOCO metrics. However, SylBoost often over-segments, which hurts cross-modal alignment and destabilizes \sylcipher/, whereas Sylber predicts syllable counts closer to ground truth and is therefore a better initializer. JE2E improves over Sylber by +14\% relative F1$^{20}$ and +3\% relative F1$^{50}$ on LibriSpeech, and by +15\% F1 and +11\% R-value on SpokenCOCO. It also outperforms the best SpokenCOCO baseline, Feat-Sim, by +3\% relative F1 score and +4.6\% R-value. This suggests that unpaired text provides useful boundary supervision. Speech-text alignments are shown in Appendix~\ref{app:spec_libri}.

\subsection{Results: Semi-supervised ASR}
We also evaluate a semi-supervised setting with limited paired data by finetuning the self-training student models on labeled subsets and comparing them with the same students before self-training. At inference time, we optionally use a 4-gram LM and tune dataset-specific decoding hyperparameters; a beam size of 5 is sufficient throughout. Figure~\ref{fig:semi_sup} shows that, without an LM, \sylcipher/ improves over the baselines by 20--100\% on English with up to 20 hours of labeled data and by roughly 7--10\% on Cantonese and Taigi with up to 2 hours. The same trend holds with an LM, although the gap closes faster as more labeled data becomes available. These results highlight the value of decipherment-based pretraining in low-resource supervised ASR.

\subsecvsp
\subsection{Ablation studies}\label{sec:ablation}
\subsecvsp
We ablate the effects of speech-text domain mismatch and speech encoder choice on syllable-level UASR in Figure~\ref{fig:ablation_domain_and_backbone} and the effects of syllabifier, soft-pooler, and vocabulary size in Figure~\ref{fig:ablation}.

\noindent\textbf{Level of speech-text domain mismatch.} As shown in Figure~\ref{tab:eff_domain_mismatch}, replacing part of the training corpus with out-of-domain text consistently degrades performance, with the deterioration becoming more pronounced as the proportion of out-of-domain text increases. News text raises the CER to 39.4 and 48.3 at ratios of 0.5 and 0.75, respectively, while Wikipedia text produces even larger increases to $42.8\rightarrow 57.1$. Image captions cause the most severe degradation: although a ratio of 0.5 and 0.75, respectively, while Wikipedia text produces even larger increases to $42.8\rightarrow 57.1$. Image captions cause the most severe degradation: although a ratio of 0.25 results in a moderate CER of 37.0, increasing the ratio to 0.5 raises the CER dramatically to 75.6. These results show that syllable-level UASR is sensitive not only to the amount of mismatched text but also to the nature and severity of the domain mismatch.

\noindent\textbf{Speech backbone choice.} As shown in Figure~\ref{tab:eff_of_backbone}, replacing HuBERT with XEUS reduces the CER from 35.5 to 33.4 on LibriSpeech and from 45.5 to 35.1 on German MLS. The considerably larger improvement on German MLS suggests that XEUS provides more robust and transferable speech representations, particularly in the cross-lingual setting.

\noindent\textbf{Syllabifier choice.} We compare Pyphen+ with the phoneme-based Syllabify\footnote{https://github.com/kylebgorman/syllabify} and a lightweight approach BPE-based syllabifier (BPE+) derived from the first stage of SuperBPE~\cite{liu-etal-2025-superbpe} with a vocabulary of 17k, roughly matching the number of syllable types in LibriSpeech. More details are in Appendix~\ref{app:code_bpe+}. All variants are trained only in the \emph{fixed-boundary stage}, and report syllable error rate (SER) to focus on syllable-level behavior. Pyphen+ gives the best SER, outperforming even Syllabify, while BPE+ remains competitive, showing that \sylcipher/ is robust even with simple, resource-light tokenizers.

\noindent\textbf{Soft-pooler design.} Figure~\ref{fig:ser_vs_pooler} compares our clamp-based soft-pooler with $\tanh$- and softmax-based alternatives~\citep{Bhati2022-scpc,wang2024unsupervised}. Using $\epsilon=0.5$ for Clamp and $0.1$ for the others, Clamp consistently improves SER by 6-36\% on SpokenCOCO and \aishell/, indicating more stable training.

\noindent\textbf{Vocabulary size.} Figure~\ref{fig:ser_vs_token_vocab} shows that \sylcipher/ is robust across a wide range of non-\texttt{<OOV>} vocabulary sizes, with the best overall SER and best non-\texttt{<OOV>} SER at 2048 and 1024 tokens.

%% file: sections/conclusion.tex
In this work, we introduced \sylcipher/, a UASR system that avoids phoneme-level resources such as G2P by recognizing speech at the \emph{syllable level}. Under the G2P-free setting, \sylcipher/ outperforms the best existing systems by 17-40\% relative CER on LibriSpeech and shows generalizability across other domains on SpokenCOCO, narrowing the gap with G2P-based systems. It also demonstrates cross-lingual robustness, including Mandarin and two low-resource languages where phoneme-based methods fail to converge, \sylcipher/ achieves a PER of 13\%. These results suggest that syllable-level modeling is a viable alternative to phoneme-level UASR and can push the horizon of more accessible and inclusive spoken language technology.


%% file: sections/appendix.tex
\section{Theoretical guarantee of \sylcipher/}\label{app:proof_of_main}
\liming{fix appendix}We state our main theorem below.
\begin{theorem}\label{thm:main}
    Let $(f_{\tilde{X}}^*, g_{\tilde{X}}^*, f_Y^*, g_Y^*)$ be a minimizer of \eqref{eq:regularized_distribution_matching} under the constraint that $H(f_{Z}(Z))\leq H(Y)$, and suppose the following assumptions hold:
    \begin{enumerate}
        \item The speech feature sequence $X$ is a sequence of one-hot vectors with $|\gX|=|\tilde{\gX}|=|\gY|$; 
        \item The true syllable boundaries are used, i.e., $b_t(X)=\mathbbm{1}[y^*(X_t)\neq y^*(X_{t+1})]$;
        \item The tokenizer $c$ is an optimal \km/ quantizer with Euclidean distance metric and cluster size $|\tilde{\gX}|$;
        \item The true ASR $y^*$ is decomposable, i.e., $y^*(\tX)=[y^*(\tX_1),\cdots,y^*(\tX_L)]$;
        \item $f_{\tX}$ and $f_Y$ are decomposable;
        \item Assumption 1-2 in \citep{wang-etal-2023-unsupasr-theory} hold for $(\tilde{X}, Y)$.
    \end{enumerate}
    Then $f_{\tX}^*$ and $f_Y^*$ are invertible and  $q_{Y|X}^*(y|x):=\mathbbm{1}[f_Y^{*-1}\circ f_{\tilde{X}}^*\circ c \circ m(x)=y]$ satisfies 
    \begin{align*}
    &\KL(p_Y||\E_X[q_{Y|X}^*])=0,\\
    &y^*(x)=\argmax_{y\in \gY^L} q^*_{Y|X}(y|x),\quad\forall x\in\gX^T.
    \end{align*}
\end{theorem}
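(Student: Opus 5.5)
The argument proceeds in three steps: reduce the speech side to a clean symbol sequence, use the entropy constraint to force the shared representation to be a bijection, and then invoke the identifiability result of \citep{wang-etal-2023-unsupasr-theory}.

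\textbf{Step 1: the speech pipeline is an exact relabeling.} Under Assumption 2, the soft-pooler $m$ with the true boundaries pools exactly the frames of each syllable. Under Assumption 1, all frames in a segment share a label, so $m_i(X)$ is again a one-hot vector. Thus $m(X)$ is a length-$L$ sequence of one-hot vectors, one per syllable.

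With $|\tilde{\gX}|=|\gX|$ clusters over one-hot points, an optimal \km/ quantizer (Assumption 3) attains zero distortion by putting a centroid on every occupied vertex. Hence $c$ is a bijection on symbols. Combined with decomposability of $y^*$ (Assumption 4), $\tX$ is a symbol-wise relabeling of the true transcript. Concretely, there is a permutation $\pi$ of $\gY$ such that $y^*(\tX_i)=\pi(\tX_i)$ and $p_Y=\pi_\#p_{\tX}$ by \eqref{eq:matched}. In particular $H(\tX)=H(Y)$.

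\textbf{Step 2: the minimizer is invertible.} At a minimizer of \eqref{eq:regularized_distribution_matching}, both KL terms should vanish. This requires $q_Z=p_Z$, and since $q_Z=g_Z\circ f_Z$, it gives $H(p_Z)\le H(f_Z(Z))$: a deterministic $f_Z$ cannot carry less information than the model needs to reproduce $p_Z$ exactly. The constraint gives $H(f_Z(Z))\le H(Y)=H(Z)$, while data processing gives $H(f_Z(Z))\le H(Z)$. Together these force $H(f_Z(Z))=H(Z)$, so $f_Z$ is injective on the support.

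Because $f_Z$ is decomposable (Assumption 5), this injectivity holds per position. Each $f_{Z,i}$ is therefore an injection from an alphabet of size $|\gY|$ into the shared token space. I would take the codomain to be the image of $f_Y$, which makes each $f_{Z,i}$ invertible onto its image.

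\textbf{Step 3: the composed map recovers $y^*$.} The map $h:=f_Y^{*-1}\circ f_{\tX}^*$ is then a well-defined position-wise bijection, wherever the image of $f_{\tX}^*$ lies inside the image of $f_Y^*$. To guarantee this, I would argue that the shared encoder and post-net together with the capacity constraint prevent the two modalities from occupying disjoint codes. Disjoint supports would make the union exceed entropy $H(Y)$ once the mix-up is interpreted as $Z$ being drawn from the mixture. I expect this to be the main obstacle, and the cleanest route is to apply the constraint to the mixture variable.

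Given this, $h_\#p_{\tX}=p_Y$ follows because both sides are produced by the same $g^*$ acting on the shared code. This is exactly $\KL(p_Y\|\E_X[q^*_{Y|X}])=0$.

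Finally, $h$ is a decomposable bijection pushing $p_{\tX}$ to $p_Y$. Assumptions 1--2 of \citep{wang-etal-2023-unsupasr-theory} are identifiability conditions, which I expect to be stated in terms of positional unigram/bigram (skip-gram) statistics. Under them, any such map must equal $\pi$, so $h=y^*$ on the support. The argmax of the indicator $q^*_{Y|X}$ is therefore $y^*(x)$ for all $x\in\gX^T$.
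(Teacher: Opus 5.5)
\paragraph{Step 2 fails as argued.} Exact matching $g_Z\circ f_Z=p_Z$ does not force $H(Z)\le H(f_Z(Z))$. Take $p_Z$ uniform on $\mathcal{Y}^L$: a constant encoder with uniform post-net outputs (all softmax weights zero) reproduces $p_Z$ exactly, yet $H(f_Z(Z))=0$. So injectivity of the encoders cannot come from matching plus entropy bookkeeping; it needs a non-degeneracy property of the language. The paper obtains it separately, in Lemma~\ref{lemma:invertible}, from Assumption~5 together with Assumptions 1--2 of \citep{wang-etal-2023-unsupasr-theory}, which it uses through full column rank of the positional unigram matrix. The uniform example has a rank-one positional unigram matrix and is excluded. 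Your use of the constraint in Step 2, $H(f_Z(Z))\le H(Y)=H(Z)$ for each modality separately, is also vacuous by data processing. Read per modality, the constraint never couples the two encoders. Two injective encoders with mismatched codes would then be equally optimal, and $f_Y^{*-1}\circ f_{\tilde{X}}^*$ would be an arbitrary relabeling rather than $y^*$.

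\paragraph{Alignment and feasibility in Step 3.} The constraint must be read on the mixture $Z\sim\tfrac{1}{2}p_{\tilde{X}}+\tfrac{1}{2}p_Y$, as you suspect, but Step 3 never derives the alignment. There is no ``same $g^*$'': the post-nets $g^*_{\tilde{X}}$ and $g^*_Y$ are modality-specific, and they output distributions over raw tokens, not codes. Excluding disjoint supports would give at most overlapping images, not $p_{h(\tilde{X})}=p_Y$.

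The paper's argument is strict concavity of entropy. Once both encoders are injective, your Step 1 gives $H(f^*_{\tilde{X}}(\tilde{X}))=H(\tilde{X})=H(Y)=H(f^*_Y(Y))$. Hence the entropy of $\tfrac{1}{2}p_{f^*_{\tilde{X}}(\tilde{X})}+\tfrac{1}{2}p_{f^*_Y(Y)}$ is at least $H(Y)$, with equality if and only if $p_{f^*_{\tilde{X}}(\tilde{X})}=p_{f^*_Y(Y)}$. The constraint forces equality, and applying $f_Y^{*-1}$ gives $p_{h(\tilde{X})}=p_Y$, i.e.\ $\KL(p_Y\|\E_X[q^*_{Y|X}])=0$.

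Under the mixture reading, your opening claim that both KL terms vanish at a minimizer also needs proof, because the constraint rules out every zero-loss configuration whose encoders are misaligned. The paper's part (i) exhibits a feasible zero-loss point. It takes $f_{\tilde{X},1}$ a permutation and $f_{Y,1}=f_{\tilde{X},1}\circ\pi^{-1}$ (in your convention), so both modalities induce the same code distribution and the mixture entropy is exactly $H(Y)$. Only then does every minimizer satisfy $g^*\circ f^*=p$ on both sides, which is what Lemma~\ref{lemma:invertible} and the concavity step consume. Your Step 1 and the final identifiability appeal essentially match the paper.
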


The proof relies on the following lemma proven in Appendix~\ref{app:proof_of_lemma_invertible}.
\begin{lemma}\label{lemma:invertible}
Suppose Assumption 5 of Theorem~\ref{thm:main} and  Assumption 1 and 2 of \citep{wang-etal-2023-unsupasr-theory} hold for $(\tX,Y)$, and functions $g_X:\tilde{\gX}^L\mapsto [0,1],g_Y:\tilde{\gY}^L\mapsto[0,1]$ of the form in \eqref{eq:postnet} be such that $p_X=g_X\circ f_X$ and $p_Y=g_Y\circ f_Y$. Then $f_X$ and $f_Y$ are both invertible.
\end{lemma}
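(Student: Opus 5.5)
Since $f_X$ is decomposable (Assumption 5 of Theorem~\ref{thm:main}), it acts token-wise: $f_X(\tX)=[\phi_X(\tX_1),\dots,\phi_X(\tX_L)]$ for a single-token map $\phi_X$ from $\tilde{\gX}$ into the embedding space, and similarly $f_Y(Y)=[\phi_Y(Y_1),\dots,\phi_Y(Y_L)]$. A token-wise map on sequences is invertible exactly when its single-token map is injective on the tokens that actually occur. So I will argue by contradiction that $\phi_X$ and $\phi_Y$ are injective; the text case is identical, so I write it only for $X$.

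Suppose $\phi_X(a)=\phi_X(a')$ for two distinct tokens $a\neq a'$ in $\tilde{\gX}$. The post-net in \eqref{eq:postnet} gives the conditional at position $i$ as a function of $f_X(\tX_{1:(i-1)})$ only. Therefore any two prefixes that differ only by swapping occurrences of $a$ and $a'$ produce identical conditionals at every later position. Since $p_X=g_X\circ f_X$, the true distribution $p_X$ inherits this property: for every $i$ and every prefix, $p_X(\tX_i\mid \tX_{1:(i-1)})$ is unchanged when $a$ is replaced by $a'$ anywhere in the prefix. Together with the chain rule, this means $a$ and $a'$ induce the same conditional distribution over all following tokens, i.e.\ $p_X(\cdot\mid \ldots a\ldots)=p_X(\cdot\mid\ldots a'\ldots)$ as distributions over continuations.

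The contradiction should come from Assumptions 1--2 of \citep{wang-etal-2023-unsupasr-theory}. As I recall them, these are a full-support or nondegeneracy condition on the token distribution and a rank condition saying that the matrix of (skip-)bigram statistics $P(\tX_i=\cdot,\tX_{i+k}=\cdot)$ has distinct rows, i.e.\ is full rank. If $a$ and $a'$ induce identical conditionals of all following tokens, then in particular $p(\tX_{i+1}=\cdot\mid \tX_i=a)=p(\tX_{i+1}=\cdot\mid \tX_i=a')$. So the rows of the conditional bigram matrix indexed by $a$ and $a'$ coincide. The joint-probability rows are those conditional rows scaled by the marginals $p(a)$ and $p(a')$, which are positive by full support, so the joint bigram matrix loses rank, contradicting the rank assumption. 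Hence $\phi_X$ is injective and $f_X$ is invertible on its image; the same argument applied to $Y$ gives invertibility of $f_Y$.

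The main obstacle is the first position and the boundary cases. The conditional at position $i$ depends only on earlier tokens, so the identification argument only constrains tokens that can appear as a context, not the last token. I would handle this either by using the full-support assumption, so that every token also appears at non-final positions, or by invoking the positional (rather than stationary) form of the rank condition at positions $i<L$. A second subtlety is translating ``equal conditionals'' into a statement about the joint matrix precisely as Assumption 2 of \citep{wang-etal-2023-unsupasr-theory} phrases it. If that assumption concerns the matrix of conditional probabilities rather than joint ones, the row-scaling step becomes unnecessary and the argument simplifies.
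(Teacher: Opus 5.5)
Your route differs from the paper's. The paper claims that $f_X(x')=f_X(x'')$ makes columns $x'$ and $x''$ of the positional unigram matrix $P^X$ coincide. It then contradicts the full column rank of $P^X$, which Theorem~1 of \citep{wang-etal-2023-unsupasr-theory} provides under Assumptions~1--2. You instead use that equal embeddings make $a$ and $a'$ interchangeable as \emph{conditioning context}. That is the mechanism the post-net actually supports. Equal embeddings say nothing about the emission probabilities $g_X(a\mid\cdot)$ and $g_X(a'\mid\cdot)$, which use different rows $W^X_{ia}$, $W^X_{ia'}$; this is exactly the substitution the paper's chain of equalities makes without justification.

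Your step ``in particular $p(\tX_{i+1}=\cdot\mid\tX_i=a)=p(\tX_{i+1}=\cdot\mid\tX_i=a')$'' fails for $i\geq 2$, and hence in general also for position-pooled bigram statistics. What you proved is equality of conditionals given the \emph{whole} prefix, with the other tokens held fixed. Conditioning on $\tX_i$ alone averages these over $\tX_{1:i-1}$ with weights $p(\tX_{1:i-1}\mid\tX_i=a)$ and $p(\tX_{1:i-1}\mid\tX_i=a')$, which differ in general. The step is valid only at $i=1$, where there is no earlier context. There $p(\tX_{2:L}\mid\tX_1=a)=p(\tX_{2:L}\mid\tX_1=a')$, so every matrix $[p(\tX_1=u,\tX_{1+k}=v)]_{u,v}$ has proportional rows $a$ and $a'$. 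No positivity is needed for this, and the softmax form gives full support anyway. So the delicate boundary is the first position, not the last. Since one token map is used at every position, every token already occurs as context at position~1.

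The more serious gap is the rank hypothesis. After the fix you need a first-position context matrix such as $[p(\tX_1=u,\tX_2=v)]_{u,v}$ to be nonsingular. Your description of Assumptions~1--2 is a guess, and the only consequence the paper draws from them, full column rank of $P^X$, does not imply this. Consider the first-order chain on $\{a,a',b\}$ with uniform start, transition row $(0.5,0.2,0.3)$ from both $a$ and $a'$, and row $(0.1,0.1,0.8)$ from $b$. It is realized by the linear-softmax post-net on a decomposable $f_X$: take $f_X(a)=f_X(a')=e_1$, $f_X(b)=e_2$, and let $W^X_{iz}$ read the log transition probabilities off the block of position $i-1$. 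Its first three positional unigrams are $\frac{1}{3}(1,1,1)$, $\frac{1}{30}(11,5,14)$ and $\frac{1}{150}(47,23,80)$, which are linearly independent. So $P^X$ has full column rank while $f_X$ is not invertible. Full column rank of $P^X$ alone therefore cannot yield the lemma by either route. Your argument closes only if you show, or assume, that the cited assumptions make a first-position context matrix nonsingular, e.g.\ an invertible transition matrix for a Markov text model.
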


Now we are ready to prove the main theorem. 
\begin{proof}
The proof consists of two main parts: (i) First, we prove that at least one  minimizer $(f_{\tilde{X},1}, g_{\tilde{X},1}, f_{Y,1}, g_{Y,1})$ can achieve a minimum of 0 for \eqref{eq:regularized_distribution_matching}; (ii) then we establish that for any minimizers $(f_{\tilde{X}}^*, g_{\tilde{X}}^*, f_Y^*, g_Y^*)$, the marginal distribution of the text posterior $q_{Y|X}^*$ matches the true text distribution $p_Y$. As a result, by Assumption 4, Theorem 1 of \citep{wang-etal-2023-unsupasr-theory} guarantees that $\argmax_{y\in\gY^L}q_{Y|X}^*(y|X)$ achieves zero-error UASR. 

To prove (i), by the definition of $y^*$ and Assumption 1, $y^*$ is a invertible mapping between $\gX$ and $\gY$. Then let $t_i=\min\{\tau:\sum_{t=1}^{\tau}b_t(X)\geq i-1\}$ be the \emph{starting time} of each syllable and apply Assumption 2, we have
\begin{align*}
    X_s = X_t,\,\forall t_i\leq s<t< t_{i+1},\,1\leq i\leq L.
\end{align*}
As a result, the output of the soft-pooler is simply $m_i(X)=X_{t_i}.$ Further, by Assumption 3, $c$ is optimal, and we claim that this is achievable if and only if for any $(x,x')\in\gX^2,$ 
\begin{align*}
    c(x)=c(x')\Longleftrightarrow y^*(x)=y^*(x').
\end{align*}
Otherwise, suppose for some $(x,x')\in\gX^2$ such that $c(x)=c(x')=:c_0$ but $y^*(x)\neq y^*(x'),$ then for any centroid $\mu_{c_0}$ of cluster $c_0$ and triangle inequality,  
\begin{align*}
    &\|x-\mu_{c(x)}\|_2+\|x'-\mu_{c(x')}\|_2\\
    =&\|x-\mu_{c_0}\|_2+\|x'-\mu_{c_0}\|_2\geq \|x-x'\|_2 >0,
\end{align*}
and therefore the \km/ objective:
\begin{align*}
    \Ls_{\mathrm{km}}(\mu_1,\cdots,\mu_{|\gX|}):=\E_{X\sim p_X}\|X-\mu_{c(X)}\|^2_2>0.
\end{align*}

However, by simply setting $c(x)=y^*(x)$ for all $x\in\gX$, we have $\mu_{c(x)}=x$ for any $x\in \gX$ and $\Ls_{\mathrm{km}}=0$ due to the invertibility of $y^*.$ This contradicts with the optimality of $c$ and thus proves the ``$\Longrightarrow$'' direction. Further, this implies that the cluster size is at least $|\gY|$ and since we set the cluster size to $|\gY|$, $c\circ m$ is equivalent to $y^*$ up to a permutation. This then proves the other direction. Let $c\circ m=\pi\circ y^*$ where $\pi$ is a permutation, then the syllable-level features $\tilde{X}=\pi\circ y^*(X)$ and thus by the permutation-invariance of discrete entropy, 
\begin{align*}
    H(\tilde{X})=H(y^*(X))=H(Y).
\end{align*}
Therefore, set $f_{\tilde{X},1}$ to be a permutation on $\tilde{\gX}$, $f_{Y,1}=f_{\tilde{X},1}\circ\pi$, $g_{\tilde{X},1}=p_{\tilde{X}}\circ f_{X,1}^{-1}$ and $g_{Y,1}=p_Y\circ f_{Y,1}^{-1},$ we have
\begin{align*}
    &\KL(p_{\tilde{X}}||g_{\tilde{X},1}\circ f_{\tilde{X},1})+\KL(p_Y||g_{Y,1}\circ f_{Y,1})\\
    =&\KL(p_{\tilde{X}}||p_{\tilde{X}})+\KL(p_Y||p_Y)=0,\\
\end{align*}
and 
\begin{align*}
    &f_{Y,1}(y(X))\\
    =&f_{\tilde{X},1}\circ \pi\circ y(X)=f_{\tilde{X},1}\circ c\circ m(X)\\
    =&f_{\tilde{X},1}(\tilde{X}).
\end{align*}
The latter implies that the \pdf/ of $f_{Z,1}(Z)$ satisfies
\begin{align}
p_{f_{Z,1}(Z)}&=\frac{1}{2}p_{f_{\tilde{X},1}(\tilde{X})}+\frac{1}{2}p_{f_{Y,1}(Y)}\nonumber\\
&=\frac{1}{2}p_{f_{\tilde{X},1}(\tilde{X})}+\frac{1}{2}p_{f_Y(y(X)),1}=p_{f_{\tilde{X},1}(\tilde{X})},\tag{*}    
\end{align}
and thus 
$$H(f_{Z,1}(Z))=H(f_{Y,1}(Y))=H(f_{X,1}(\tX))\leq H(Y).$$
Therefore, by the nonnegativity of $\KL$, $(f_{\tilde{X},1}, g_{\tilde{X},1}, f_{Y,1}, g_{Y,1})$ is an optimal solution of \eqref{eq:regularized_distribution_matching} with a minimum of $0$, which proves $(i)$. 

To prove (ii), we first use (i) to conclude that
\begin{align*}
    p_{\tX}=g_{\tX}^*\circ f_{\tX}^*,\,p_Y=g_Y^*\circ f_Y^*.
\end{align*}
Then it amounts to prove that any minimizer $(f_{\tilde{X}}^*,g_{\tilde{X}}^*,f_Y^*,g_Y^*)$ of \eqref{eq:regularized_distribution_matching} satisfies
\begin{align}
    p_{f_Z^*(Z)}=p_{f_X^*(X)}=p_{f_Y^*(Y)}.\tag{*}
\end{align}
Since if this is the case, by Lemma~\ref{lemma:invertible}, $f_{Y}^*$ is invertible and thus for any $y\in\gY^L$,
\begin{align*}
    &p_{f_Y^{*-1}\circ f_X^*(X)}(y)\\
    \overset{(a)}{=}&p_{f_X^*(X)}(f_Y^*(y))
    \overset{(b)}{=}p_{f_Y^*(Y)}(f_Y^*(y))\\
    \overset{(c)}{=}&p_Y(y)\Longrightarrow \KL(p_Y||\E_Xq^*_{Y|X})=0,
\end{align*}
where $(b)$ uses (*) and $(a)(c)$ uses Lemma~\ref{lemma:invertible}.

To prove (*), we use the concavity of the discrete entropy $h(p):=-\sum_xp(x)\log p(x)$ and Lemma~\ref{lemma:invertible},
\begin{align*}
    H(f_Z^*(Z))&=h\brac{\frac{1}{2}p_{f_X^*(X)}+\frac{1}{2}p_{f_Y^*(Y)}}\\
    &\overset{(d)}{\geq} \frac{1}{2}h(p_{f_X^*(X)})+\frac{1}{2}h(p_{f_Y^*(Y)})\\
    &\overset{(e)}{=} \frac{1}{2}(H(X)+H(Y))=H(Y),
\end{align*}
with equality if and only if $p_{f_Y^*(Y)}=p_{f_X^*(X)},$ where $(d)$ uses the concavity of the entropy function and $(e)$ uses Lemma~\ref{lemma:invertible}. This concludes the proof of $(ii).$
\end{proof}

\section{Proof of Lemma~\ref{lemma:invertible}}\label{app:proof_of_lemma_invertible}
\begin{proof}
We prove the lemma by contradiction and focus on proving the invertibility of $f_X$ since the proof is analogous for $f_Y$. Suppose otherwise $f_X$ is not invertible, then by Assumption 5, there exists $(x',x'')\in\tilde{\gX}^2$ such that $f_X(x')=f_X(x'')$ but $x'\neq x''$, then by the definition of $g_X$,
\begin{multline}
    p_{X_i}(x')
    = \sum_{x_{-i}\in\tilde{\gX}^{L-1}}
    p_X(x_{1:i-1},x',x_{i+1:L}) \\
    = \sum_{x_{-i}\in\tilde{\gX}^{L-1}}
    \prod_{j<i}
    g_X\!\left(
        x_j \mid f_X(x_{1:j-1})\right)\cdot\\
    g_X\!\left(
        x' \mid f_X(x_{1:i-1})
    \right)\cdot\\
    \prod_{k>i}
    g_X\!\left(
        x_k \mid
        f_X(x_{1:i-1}), f_X(x'), f_X(x_{i+1:k-1})
    \right) \\
    = \sum_{x_{-i}\in\tilde{\gX}^{L-1}}
    \prod_{j<i}
    g_X\!\left(
        x_j \mid f_X(x_{1:j-1})
    \right)\cdot\\
    g_X\!\left(
        x'' \mid f_X(x_{1:i-1})
    \right)\cdot\\
    \prod_{k>i}
    g_X\!\left(
        x_k \mid
        f_X(x_{1:i-1}), f_X(x''), f_X(x_{i+1:k-1})
    \right) \\
    = \sum_{x_{-i}\in\tilde{\gX}^{L-1}}
    p_X(x_{1:i-1},x'',x_{i+1:L})
    = p_{X_i}(x'').
\end{multline}
Therefore, the positional unigram matrix (PUM)
\begin{align*}
    P^X:=\begin{bmatrix}
        p_{X_1}^\top \\
        \vdots \\
        p_{X_L}^\top
    \end{bmatrix}
\end{align*}
is column-rank-deficient. However, Theorem 1 of \citep{wang-etal-2023-unsupasr-theory} asserts that if Assumption 1 and 2 of \citep{wang-etal-2023-unsupasr-theory} holds, $P^X$ has full column-rank, which is a contradiction. Therefore, $f_X(x')\neq f_X(x'')$ if $x'\neq x''$ and $f_X$ is invertible.
\end{proof}

\begin{figure*}
    \centering
    \begin{subfigure}{0.32\textwidth}
        \includegraphics[width=0.95\textwidth]{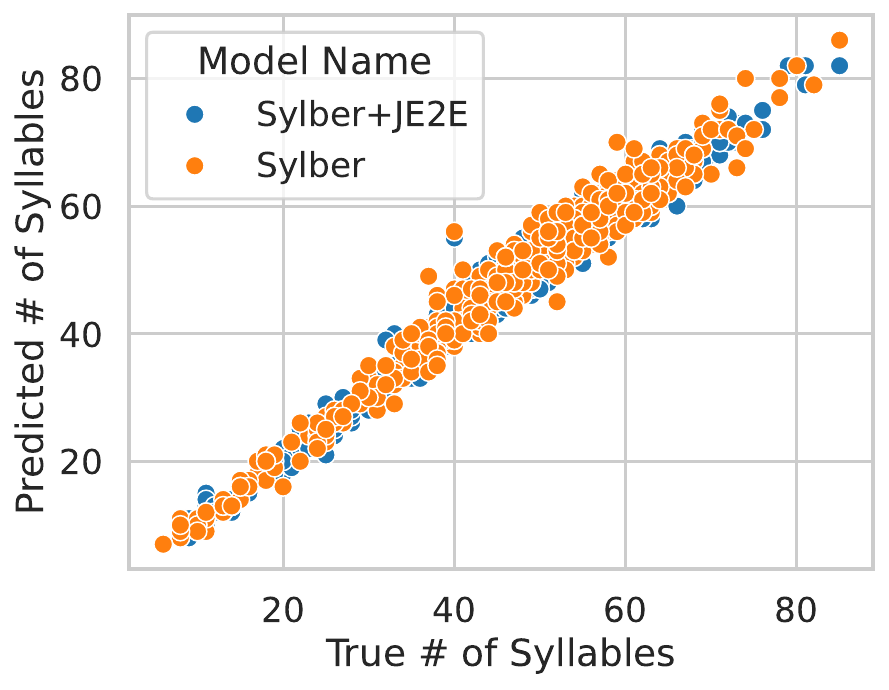}
        \caption{Number of predicted vs. gold syllables in single utterances}
        \label{fig:syl_bound_stat_a}
    \end{subfigure}
    \begin{subfigure}{0.32\textwidth}
        \includegraphics[width=0.95\textwidth]{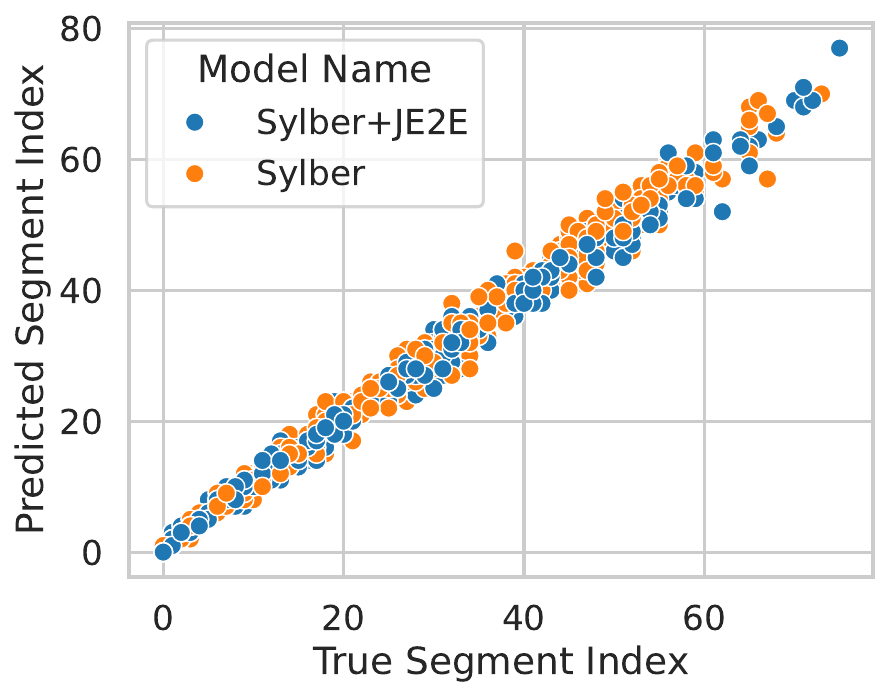}
        \caption{Closest predicted vs. gold syllable segment index}
        \label{fig:syl_bound_stat_b}
    \end{subfigure}
    \begin{subfigure}{0.32\textwidth}
        \includegraphics[width=0.95\textwidth]{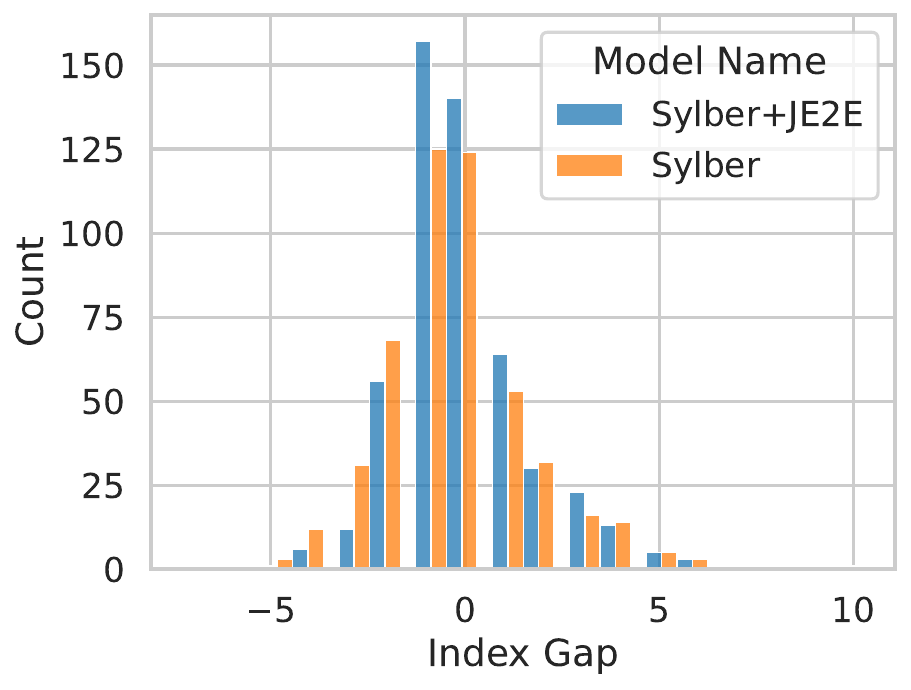}
        \caption{Distribution of the  gold-predicted segment index gap}
        \label{fig:syl_bound_stat_c}
    \end{subfigure}
    \caption{\textbf{Reliability analysis of Syllable-level boundaries}. The syllable boundaries are extracted from the LibriSpeech test sets under the matched setup. The true segment index of a timestamp is the position index of the true segment that it belongs to, and only timestamps corresponding to a predicted syllable boundary is considered in the plots. The index gap is the difference between the true and predicted segment index.}
    \label{fig:syl_bound_stat}
\end{figure*}

\section{Theoretical analysis of \sylcipher/ under clustering and segmentation noise}\label{app:assumption_discussion}
In practice, the speech cluster sequence $X$ can be noisy such that clusters within each syllabic segment $\tilde{X}_i$ may not all correspond to the true syllable $Y_i$. These type of cluster noise can be well-modeled by an $N$-ary symmetric channel~\cite{cover2006-it} defined below.
\begin{definition}{($N$-ary symmetric channel (NSC))}
    An $N$-ary symmetric channel $\mathrm{NSC}(N, p)$ with input $X$ and output $Y$ satisfy
    \begin{align*}
        \Pr[Y=y]=\begin{cases}
            1-p,\,\text{if }y=X,\\
            \frac{p}{N},\,\text{otherwise.}
        \end{cases}
    \end{align*}
    Further, define the \emph{transition matrix} of the channel as
    \begin{align*}
        \Pi_{N,p} := \begin{bmatrix}
            1-p & p/N &\cdots & p/N\\
            p/N & 1-p & \cdots & p/N\\
            \vdots & \vdots & \ddots & \vdots \\
            p/N & p/N & \cdots & 1-p 
        \end{bmatrix}. 
    \end{align*}
\end{definition}

To simplify our analysis, we choose the soft pooler $m(\cdot)$ to output a random feature frame in each predicted segment $i$, $X_{t_i'}$,  rather than a weighted average of all the frame. Then we pass the feature frames independently into an $\mathrm{NSC}(|\gX|, \epsilon_c)$ channel with error probability $\epsilon_c>0$ to create $\tX_i$. Therefore by definition,
\begin{align*}
    \Pr[\tX_i=x]=\begin{cases}
        1-\epsilon_c,\,\text{if }x=X_{t_i'},\\
        \frac{\epsilon_c}{|\gX|},\,\text{otherwise}.
    \end{cases}
\end{align*}
Assuming no segmentation noise, the new speech distribution $\tilde{p}_{\tilde{X}}$ satisfies
\begin{align*}
    \tilde{p}_{\tilde{X}}(x)=\sum_{x'\in \gX^L}p_{\tilde{X}}(x')\Pi_{|\gX|,\epsilon_c}^{\otimes L}(x',x),
\end{align*}
where $\otimes$ denotes the Kronecker product. We can bound the KL divergence between the $\tilde{P}_{\tilde{X}}$ and $P_{\tilde{X}}$ using Jensen's inequality as:
\begin{align*}
    \KL(p_{\tX}||\tilde{p}_{\tX}) &\leq \E_{p_{\tX}}\KL(I_{|\gX|^{L}}||\Pi_{|\gX|,\epsilon_c}^{\otimes L})\\
    &= L\KL(I_{|\gX|}||\Pi_{|\gX|,\epsilon_c})\\
    &=L|\gX|\log \frac{1}{1-\epsilon_c}=O(L|\gX|\epsilon_c).
\end{align*}
Thus, by data processing inequality,
\begin{align*}
    \KL(p_{Y}||\E_{\tilde{p}_{\tX}}q_{Y|X}^*) \leq \KL(p_{\tX}||\tilde{p}_{\tX}) =O(L|\gX|\epsilon_c).
\end{align*}

Further, for small enough  $\epsilon_c > 0$, We can prove a variant of Lemma~\ref{lemma:invertible} using the noisy PUM in place of the clean PUM, the former of which is invertible if and only if the latter is. Therefore, the model is robust against NSC noise and both conclusions from Theorem~\ref{thm:main} hold approximately.

To further verify that the assumptions in Theorem~\ref{thm:main} approximately hold in practice and to develop a more general theory of UASR under segment noise, we visualize the key statistics of the syllable boundaries predicted by Sylber before and after JE2E refinement used in \sylcipher/, as shown in Figure~\ref{fig:syl_bound_stat}.  Figure~\ref{fig:syl_bound_stat_a} indicates that the number of predicted and true syllables in each utterance is approximately the same, even for long utterances with more than 80 syllables. Further, Figure~\ref{fig:syl_bound_stat_b} and Figure~\ref{fig:syl_bound_stat_c} show that the positions of the true and predicted syllables tend to match or within 5 syllables from each other, regardless of the absolute positions of the syllables, suggesting that the insertion/deletion probability of the syllables is low. Motivated by these observations, we propose the following \emph{deletion-replication channel}~\cite{Drinea2007} model for the syllable segmentation noise.
\begin{definition}{(deletion-replication channel (DRC))}\label{def:anchored_noise}
    A deletion-replication channel $\mathrm{DRC}(P_T)$ with input sequence $X=(X_1,\cdots,X_N)$ and output sequence $Y=(Y_1,\cdots,Y_{T})$ satisfies
    \begin{itemize}
        \item $T=\sum_{i=1}^N T_i$, where $T_i \in \{0,\cdots,T_{\max}\}$'s are sampled i.i.d from some distribution $P_T$;
        \item For any $i\in\{1,\cdots,N\}$ and $\sum_{j<i}T_j \leq t \leq \sum_{j\leq i}T_j,$ $$Y_t\equiv X_i.$$
    \end{itemize}
    Further, denote $\epsilon_d:=P_T(0)$ as the \emph{deletion probability} and $\epsilon_r:=\sum_{t>1} P_T(t)$ as the \emph{replication probability}.
\end{definition}
Intuitively, DRC models how the boundary detector i) fails to detect a boundary, leading to deletion and ii) detects multiple boundaries per syllable, leading to replication. Further, from Figure~\ref{fig:syl_bound_stat_c}, we can see that $\epsilon_d$ and $\epsilon_r$ of the syllable segment sequence are both very low. To analyze the effect of DRC on distribution matching, denote the distribution of any $N$-gram subsequence of the predicted segment sequence $\tilde{X}$ as $\tilde{P}_{\tilde{X}_{-N}\cdots \tilde{X}_0}$ then we have 
\begin{multline*}
    \tilde{p}_{\tilde{X}_{-N}\cdots \tilde{X}_0} = (1-\epsilon_r-\epsilon_d)^N p_{\tilde{X}_{-N}\cdots \tilde{X}_0} +\\
    (1-(1-\epsilon_r-\epsilon_d)^N)q,
\end{multline*}
where $p_{\tilde{X}_{-N}\cdots \tilde{X}_0}$ is the true $N$-gram distribution under true syllable boundaries, and $q$ is some residual distribution created when there is at least one insertion or replication in the $N$-gram sequence. Then, as long as $(1-\epsilon_r-\epsilon_d)^N \approx 1$, we have $\KL(P_{\tilde{X}_{-N}\cdots \tilde{X}_0}||\tilde{P}_{\tilde{X}_{-N}\cdots \tilde{X}_0})$ to be low. This also indicates that lower-order N-grams are easier to match than higher-order ones even with infinite data under DRC segmentation noise, as the weight for the true $N$-gram distribution, $(1-\epsilon_r-\epsilon_d)^N$,  decays exponentially as $N$ increases. Nonetheless, using similar arguments, we can show that given enough samples and with sufficiently low $\epsilon_d$ and $\epsilon_r$, one can achieve approximate distribution matching for lower-order $N$-grams and almost zero UASR error rate.

\begin{figure}
    \centering
    \includegraphics[width=0.5\textwidth]{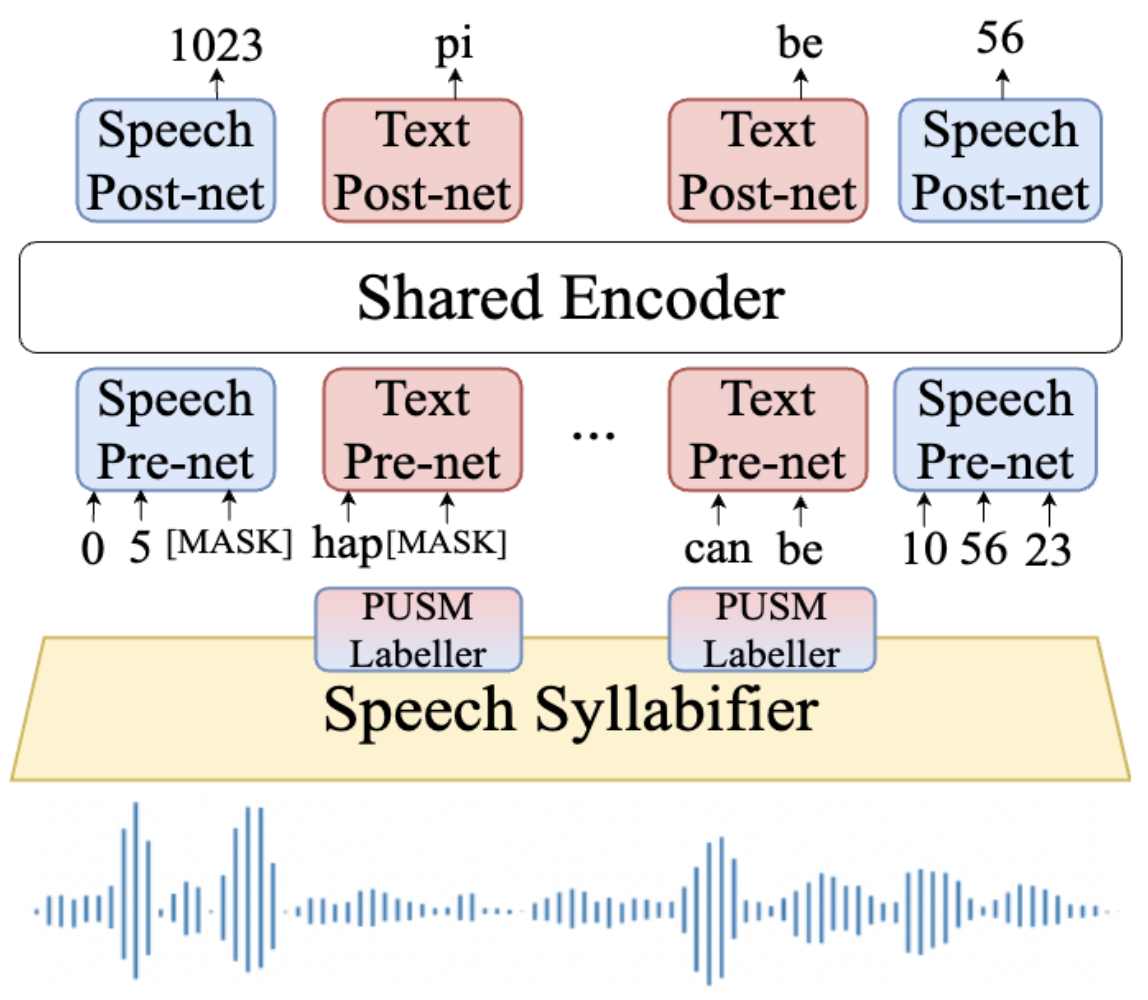}
    \caption{Initialization of \sylcipher/ using interleaving speech tokens and PUSM-predicted text tokens to stabilize training for extreme low-resource scenario.}
    \label{fig:interleave_init}
\end{figure}

\begin{table*}[t]
\centering
\begin{threeparttable}
\small
\setlength{\tabcolsep}{4pt}
\resizebox{1.0\textwidth}{!}{
\begin{tabularx}{\textwidth}{@{}p{0.19\textwidth}Y@{}}
\toprule
Component & Configuration \\
\midrule
Acoustic features
& Layer 21 of HuBERT-Large pretrained on 60k-hour LibriLight or last layer (18) of XEUS encoder\\
\midrule
Syllable-boundary detecter
& Five-layer 1-D CNN/TDNN with stride 1 and same padding. Kernel widths are $[11,9,7,5,3]$; channel dimensions are $1024\!\rightarrow\!500\!\rightarrow\!500\!\rightarrow\!500\!\rightarrow\!500\!\rightarrow\!500$; ReLU follows each convolution. Two linear heads map 500 dimensions to (i) one sigmoid boundary probability and (ii) 100 frame-level acoustic classes. \\
\midrule
Soft-pooler
& soft segment assignment/pooling uses sharpness $\epsilon=0.5$. A fixed $K$-means codebook converts each pooled word representation to a straight-through hard one-hot word token. \\
\midrule
Speech pre-net
& Learned speech-token embedding plus scaled positional encoding; embedding/output dimension 768; no additional hidden layer. The input inventory contains the $K$ acoustic syllable tokens and special symbols. \\
\midrule
Text pre-net
& Learned text-token embedding plus scaled positional encoding; embedding/output dimension 768; no additional hidden layer. The input inventory contains the $K$ written-word types and special symbols. \\
\midrule
Shared encoder
& Two Transformer encoder layers; model dimension 768; feed-forward dimension 3072; 12 attention heads; GELU activation; dropout 0.1; attention dropout 0.1; activation dropout 0; layer-drop 0.05; relative-position embeddings; BERT-style initialization. Maximum speech/text sequence length is 100 tokens. \\
\midrule
Shared mix-up quantizer
& Gumbel--Softmax product quantizer with 3 groups and 100 entries per group. Each encoder position is independently replaced by a shared codebook vector with probability 0.3 in the static-boundary SylCipher setting. \\
\midrule
Speech post-net
& One linear projection from 768 encoder dimensions to speech-token logits ($K$ acoustic types plus special symbols). It is used to reconstruct masked and unmasked speech inputs. \\
\midrule
Text post-net
& One linear projection from 768 encoder dimensions to text-word logits ($K$ word types plus special symbols). It is used for text reconstruction and speech-to-text prediction. \\
\bottomrule
\end{tabularx}}
\caption{\textbf{Architecture used by \sylcipher/.}}
\label{tab:sylcipher-architecture}
\end{threeparttable}
\end{table*}

\begin{table*}[t]
\centering
\begin{threeparttable}
\small
\setlength{\tabcolsep}{4pt}
\resizebox{1.0\textwidth}{!}{
\begin{tabularx}{\textwidth}{@{}p{0.13\textwidth}p{0.23\textwidth}Y@{}}
\toprule
Stage & Supervision & Procedure and output \\
\midrule
Codebook preparation
& Sylber boundaries
& Mean-pool features within Sylber segments; train one
$K$-means model and fix it during training. \\
\midrule
Behavior cloning
& Sylber boundaries
& Train the five-layer CNN with boundary BCE (positive-class weight 1.5) and
frame-level acoustic cross-entropy. The acoustic targets are 100-means labels
from HuBERT-base layer 6. Both loss weights are 1. \\
\bottomrule
\end{tabularx}}
\caption{\textbf{Boundary and codebook initialization}.}
\label{tab:boundary-refinement}
\end{threeparttable}
\end{table*}

\begin{table*}[t]
\centering
\begin{threeparttable}
\small
\setlength{\tabcolsep}{4pt}
\resizebox{1.0\textwidth}{!}{
\begin{tabularx}{\textwidth}{@{}p{0.17\textwidth}p{0.38\textwidth}Y@{}}
\toprule
Experimental setting & Optimization & Schedule, batching, and selection \\
\midrule
Fixed-boundary MLM-based training & Adam; learning rate $2\times 10^{-4}$; $\beta =(0.9,0.98)$;
$\epsilon =10^{-6}$; weight decay 0; gradient-norm clipping 20; polynomial decay
(power 1); 150 warm-up updates; maximum/total updates $10^{6}$. Poisson span masking with mean length 3.5 and a mask budget below 30\%.
Replace 90\% of selected positions by \texttt{<MASK>} and 10\% by a random token.
Masked-token loss weight 1; unmasked-token loss weight 0.5; speech and text
modalities weighted equally.
& Eight 32-GB V100 GPUs; at most 500 tokens per GPU; update frequency 16;
speech:text batch ratio 1:1; seed 0. Random speech and text utterances are paired in each training batch. Shared
codebook mix-up probability is 0.3. Train the fixed-boundary stage from scratch
for fewer than 7000 epochs; initialize later JE2E stage from the fixed-boundary checkpoint and train for fewer than 1600 epochs and PUSM stage for 2000 epochs (or 2000 steps by using the whole dataset as a single batch). Stop when paired
validation SER flattens. \\
\midrule
JE2E boundary refinement
& Initialize from converged static-boundary \sylcipher/ and CNN checkpoints. Adam with
learning rate $2\times 10^{-4}$; text-loss ratio 10; end-to-end policy-loss weight
1 and policy-pretraining weight 0.
& Update frequency 890, chosen to produce approximately one optimizer update per
epoch for the 2048-word setup. Stop before
700 epochs when validation WER converges.\\
\midrule
PUSM training
& Optimize the PUSM objective at learning rate $10^{-3}$.
& Use an update of 7200 to approximately use the whole dataset as a single batch; Codebook mix-up probability is set to 0.\\
\midrule
Pseudo-text CTC student
& Initialize the unfinetuned 60k-hour wav2vec 2.0 checkpoint. Adam;
learning rate $10^{-4}$; $\beta =(0.9,0.98)$; $\epsilon =10^{-8}$; 
& One GPU; a batch size of 32 with a one-layer bi-directional LSTM classifier network with 512 hidden units. Select the checkpoint by
validation WER. \\
\bottomrule
\end{tabularx}}
\caption{\textbf{Optimization parameters and training schedule.}}
\label{tab:optimization}
\end{threeparttable}
\end{table*}

\begin{table}[ht]
    \centering
    \begin{tabular}{l|c}
        \toprule
        \textbf{Seed} & CER ($\downarrow$) \\
        \midrule
        0 & 33.0 \\
        1 & 33.4 \\
        2 & 33.2 \\
        3 & 34.1 \\
        mean$\pm$std & 33.4$\pm$0.5 \\
        \bottomrule
    \end{tabular}
    \caption{\textbf{Effect of random seeds.} The model is a syllable-level PUSM with XEUS backbone trained on LibriSpeech.}
    \label{tab:eff_random_seed}
\end{table}

\begin{table}[ht]
    \centering
    \begin{tabular}{l|ccc}
        \toprule
        \textbf{Model} & $\lambda_{\text{cp}}$ & $\lambda_{\text{gp}}$ & \textbf{SER ($\downarrow$)}\\
        \midrule
        Syllabic GAN & 2 & 1.5 & 115.8\\
        Syllabic GAN & 2 & 2 & 115.7\\
        Syllabic GAN & 4 & 1.5 & 123.7\\
        Syllabic GAN & 4 & 2 & 126.5\\
        SylCipher & - & - & \textbf{25.5}\\
        \bottomrule
    \end{tabular}
    \caption{\textbf{Comparison with syllable-level GAN model on LibriSpeech}. The GAN uses the same architecture and training objective as \wtovu/. $\lambda_{\text{cp}}$ stands for the code penalty regularization loss and $\lambda_{\text{gp}}$ stands for the gradient penalty loss, both used by \wtovu/. SER stands for syllable error rate (lower is better).}
    \label{tab:comparison_with_syllable_gan}
\end{table}

\section{Implementation details of \sylcipher/}\label{app:implementation}
For English experiments, \sylcipher/ uses a \hubl/-large\footnote{\tiny https://dl.fbaipublicfiles.com/hubert/hubert\_large\_ll60k.pt} model pretrained on LibriLight~\citep{librilight} as the SSL encoder in the speech syllabifier, and initialize the soft-pooler with unsupervised boundary labels from Sylber~\citep{Cho2025-sylber}. For languages other than English, we instead use XEUS\footnote{\tiny https://huggingface.co/espnet/xeus/blob/main/model/xeus\_checkpoint\_new.pth}~\citep{chen2024-xeus}, which is pretrained on Mandarin (among other languages) and significantly outperforms \hubl/. We also finetune Sylber on \aishell/ to ensure stable convergence. For the \prenet/s, shared encoder, \postnet/s, MLM parameters, and most of the optimizer hyperparameters, we follow JSTTI~\citep{Ni2025-jstti}, as modifying them did not yield consistent improvements. More implementation details can be found in Table~\ref{tab:sylcipher-architecture},~\ref{tab:boundary-refinement} and \ref{tab:optimization}.

\section{Additional experiments: multiple random seeds and comparison with GAN}
Table~\ref{tab:eff_random_seed} evaluates the robustness of the syllable-level PUSM to random initialization. Across four random seeds, the model obtains CERs between 33.0 and 34.1, with a mean of 33.4$\pm$0.5. The small standard deviation and narrow performance range indicate that the proposed training procedure is stable and that its performance is not sensitive to the choice of random seed.

Table~\ref{tab:comparison_with_syllable_gan} compares SylCipher with a syllable-level GAN based on the \wtovu/ architecture and objective. The GAN performs poorly across all tested combinations of code-penalty and gradient-penalty weights, yielding SERs between 115.7 and 126.5. Increasing the code-penalty weight from 2 to 4 further degrades performance, while varying the gradient-penalty weight has a comparatively smaller effect. In contrast, SylCipher achieves an SER of 25.5, corresponding to a 78.0\% relative error reduction over the best GAN configuration. These results suggest that directly extending the adversarial wav2vec-U framework to syllable-level recognition is ineffective, whereas SylCipher provides substantially more reliable syllable-level transcription.

When training \sylcipher/ on Taigi, an extreme low-resource language, we use an interleaving language modeling strategy shown in Figure~\ref{fig:interleave_init} to initialize the model. First, we train an PUSM model and use it to generate pseudo-text labels for the utterances. Then we uniformly randomly interleave the PUSM text tokens with the speech tokens from the speech syllabifier with an interleaving probability of 0.5. We found that this approach allows the model to learn a more aligned shared latent space for speech and text than naive cross-entropy training.

\section{Pseudo-code for the Pyphen+ syllabifier}\label{app:code_pyphen+}
See Algorithm~\ref{alg:code_pyphen+}.

\section{Pseudo-code of BPE+ syllabifier}\label{app:code_bpe+}
See Algorithm~\ref{alg:code_bpe+}.

\begin{algorithm}[ht]
\footnotesize
\caption{Merging No-Vowel Segments and Naive Syllabification}
\begin{algorithmic}[1]
\Function{merge\_novowel\_segments}{segs}
    \State new\_segs $\gets$ [ ]
    \State n\_seg $\gets$ len(segs)
    \State i $\gets 0$
    \While{$i < n\_seg$}
        \State seg $\gets$ segs[i]
        \If{seg has no vowels}
            \If{seg is the last segment}
                \State merge seg into previous segment (if any)
            \Else
                \State merge seg into the next segment
                \State $i \gets i + 1$
            \EndIf
        \Else
            \State append seg to new\_segs
        \EndIf
        \State $i \gets i + 1$
    \EndWhile
    \State \Return new\_segs
\EndFunction

\Function{naive\_syllabify}{w}
    \State w0 $\gets$ w
    \If{w ends with a silent `e' (not ``-le'')}
        \State drop the final `e' in w
    \EndIf
    \If{w ends with ``-ed'' and root does not end with `t' or `d'}
        \State drop the `e' in ``-ed'' in w
    \EndIf

    \State Apply rule to w: split \texttt{V C-CC… V} into \texttt{VC-CC…V}
    \State Apply rule to w: split \texttt{VCCV} into \texttt{VC-CV}
    \State Apply rule to w: split \texttt{VCV} into \texttt{V-CV}

    \If{modifications were made}
        \If{w0 ended with `e'}
            \State add back `e' to w
        \ElsIf{w0 ended with `ed'}
            \State restore `ed' to w
        \EndIf
    \EndIf
    \State \Return syllabified w
\EndFunction

\Function{pyphen\_plus\_syllabify}{w}
    \State syls $\gets$ []
    \State segs $\gets$ \href{https://github.com/grantjenks/python-wordsegment}{wordsegment}.segment(w)
    \For{each seg in segs}
        \State syls\_ $\gets$ pyphen.inserted(seg)
        \State syls\_ $\gets$ merge\_novowel\_segments(syls\_)
        \If{syls\_ has only one syllable \textbf{and} \href{https://github.com/mholtzscher/syllapy}{syllapy}.count(syls\_) $>$ 1} 
            \State syls\_ $\gets$ naive\_syllabify(syls\_)
        \EndIf
        \State extend syls with syls\_ 
    \EndFor
\EndFunction
\end{algorithmic}
\label{alg:code_pyphen+}
\end{algorithm}

\begin{algorithm}[ht]
\footnotesize
\caption{Splitting and Processing BPE Tokens with Vowel Constraints}
\begin{algorithmic}[1]
\Function{split\_on\_nonconsecutive\_vowels}{token}
    \State parts $\gets$ [ ]
    \State current $\gets$ [ ]
    \State vowel\_positions $\gets$ [ ]
    \For{each character $ch$ in token}
        \State append $ch$ to current
        \If{$ch \in$ VOWELS}
            \State record position of $ch$ in current
        \EndIf
        \If{two or more vowels are non-consecutive}
            \State cut before the last vowel
            \State append left substring to parts
            \State reset current and vowel\_positions accordingly
        \EndIf
    \EndFor
    \If{current is not empty}
        \State append current to parts
    \EndIf
    \State \Return parts
\EndFunction

\Function{enforce\_vowel\_constraint}{parts}
    \State remove empty parts
    \State new\_parts $\gets$ [ ]
    \State buffer $\gets$ empty string
    \For{each part $p$ in parts}
        \State buffer $\gets$ buffer + $p$
        \If{buffer contains a vowel}
            \If{next part starts with a vowel \textbf{or} ``E''}
                \State continue without breaking
            \Else
                \State append buffer to new\_parts
                \State reset buffer
            \EndIf
        \EndIf
    \EndFor
    \If{buffer not empty}
        \If{new\_parts not empty}
            \State merge buffer into the last part
        \Else
            \State append buffer as a new part
        \EndIf
    \EndIf
    \State \Return new\_parts
\EndFunction

\Function{bpe\_plus\_syllabify}{tokens}
    \State first\_split $\gets$ apply split\_on\_nonconsecutive\_vowels to each token
    \State final\_parts $\gets$ enforce\_vowel\_constraint(first\_split)
    \State \Return final\_parts
\EndFunction
\end{algorithmic}
\label{alg:code_bpe+}
\end{algorithm}

\section{Spectrogram examples on LibriSpeech}\label{app:spec_libri}
\begin{figure*}
    \centering
    \begin{subfigure}{0.9\textwidth}
    \includegraphics[width=0.99\textwidth]{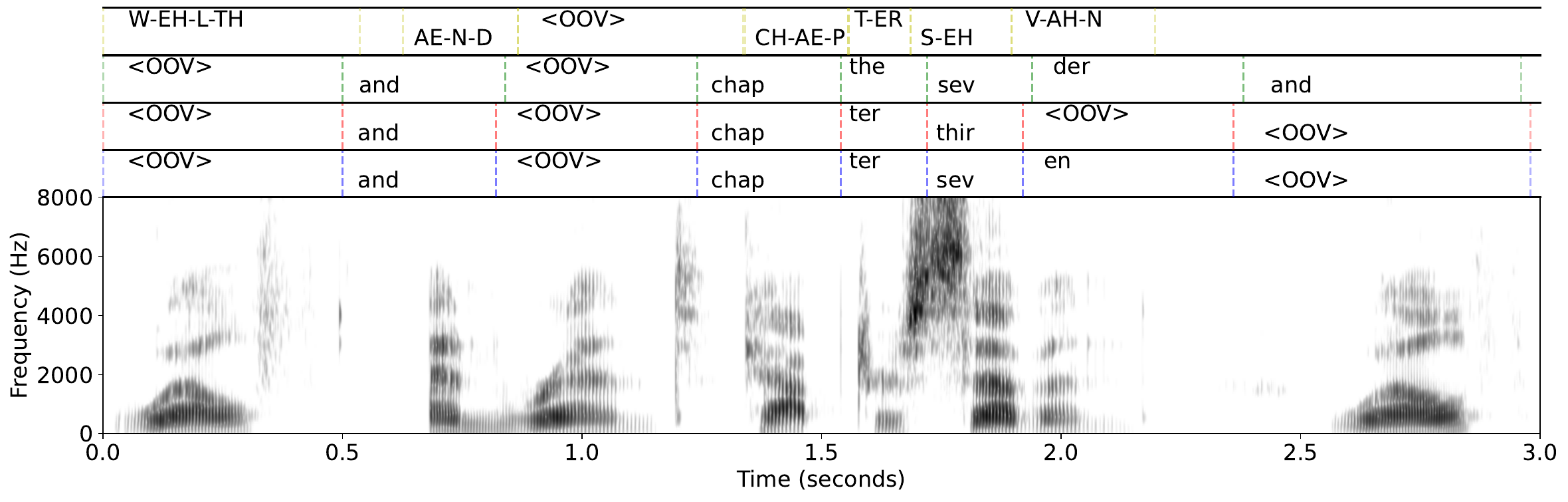}
    \caption{Reference transcript: ``Wealth and rent chap-ter sev-en wealth''.}    
    \end{subfigure}
    \begin{subfigure}{0.9\textwidth}
    \includegraphics[width=0.99\textwidth]{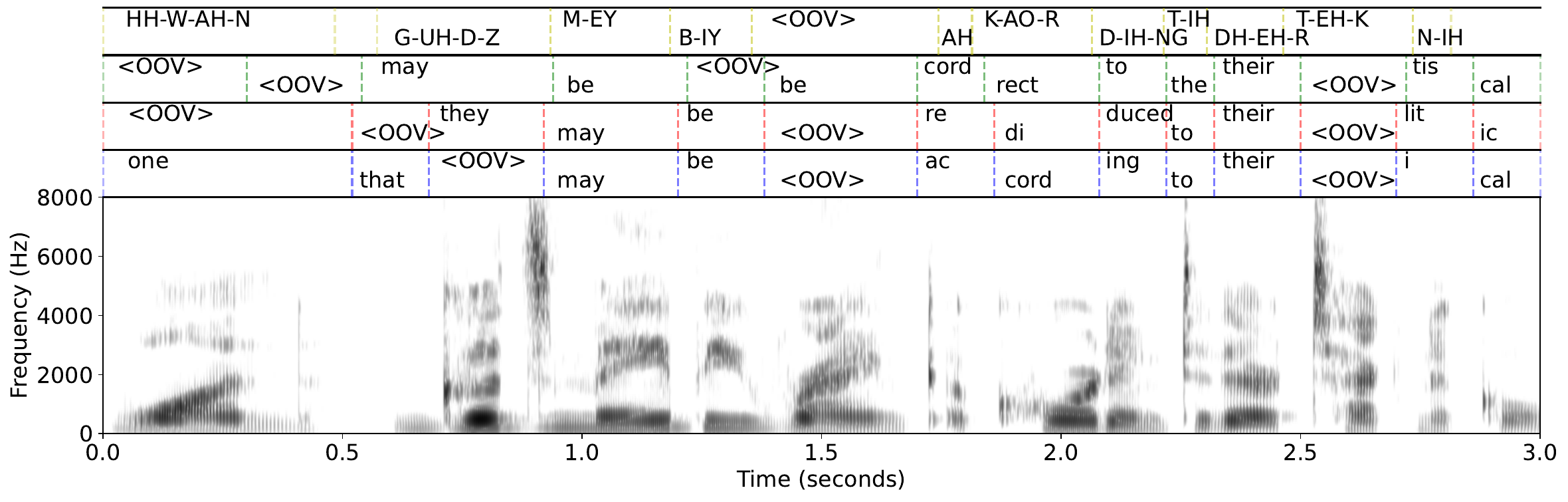}
    \caption{Reference transcript: ``One goods may be ranked ac-cord-ing to their tech-ni-cal''.}    
    \end{subfigure}
    \begin{subfigure}{0.9\textwidth}
    \includegraphics[width=0.99\textwidth]{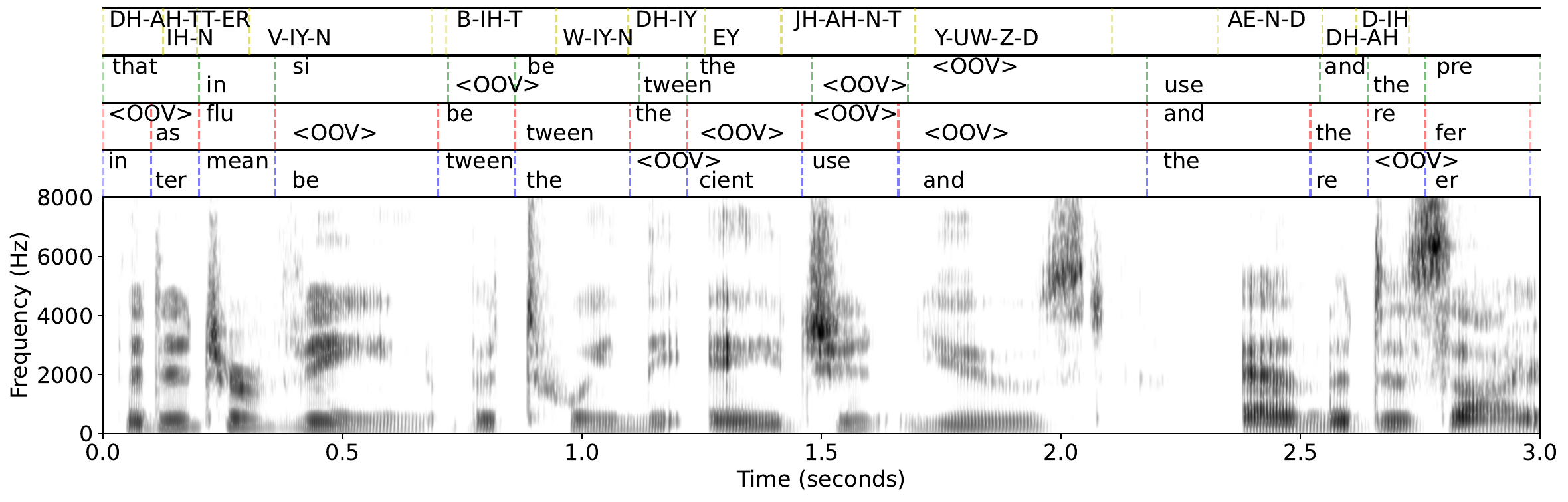}
    \caption{Reference transcript: ``That in-ter-vene be-tween the a-gent used and the de-sired''}    
    \end{subfigure}
    \begin{subfigure}{0.9\textwidth}
    \includegraphics[width=0.99\textwidth]{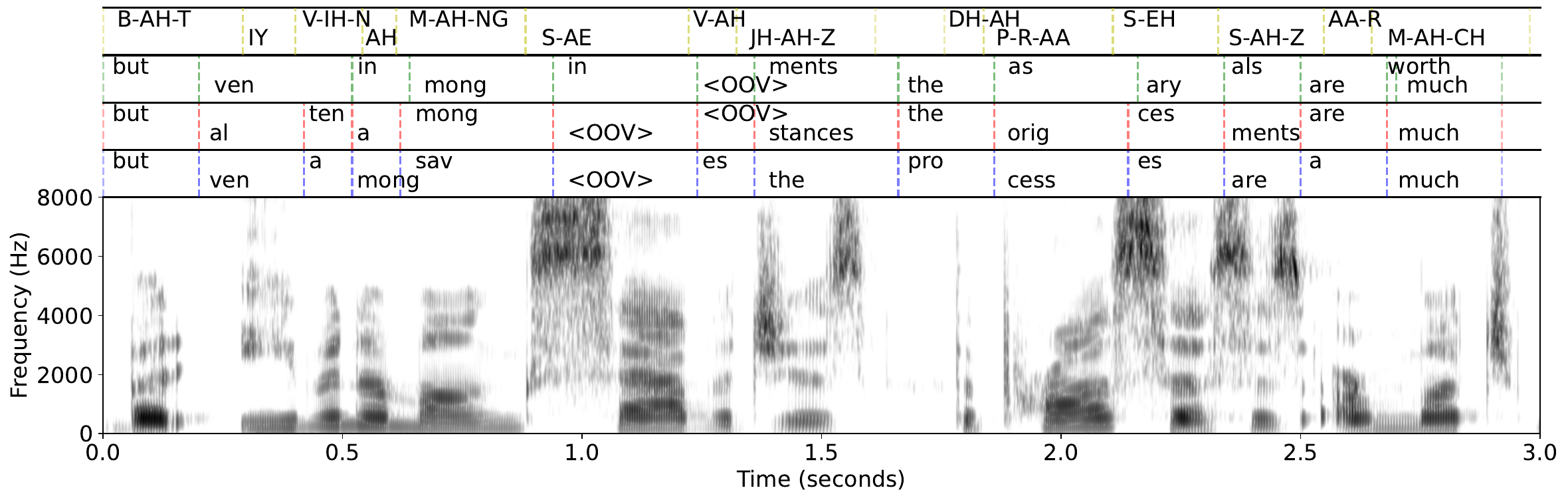}
    \caption{Reference transcript: ``But e-ven a-mong sav-ages the pro-cess-es are much''}    
    \end{subfigure}
    \caption{\textbf{Spectrograms of audio examples in our test split of LibriSpeech clean subsets (matched setting) and the predicted speech-text alignment by \sylcipher/ after different training stages.} Audios are truncated to the 3-second mark for better visualization. The alignment bars from top to bottom: Forced alignment, Sylber, Sylber+JE2E, Sylber+JE2E+PUSM.}
\end{figure*}

\begin{figure*}
    \centering
    \begin{subfigure}{0.9\textwidth}
    \includegraphics[width=0.99\textwidth]{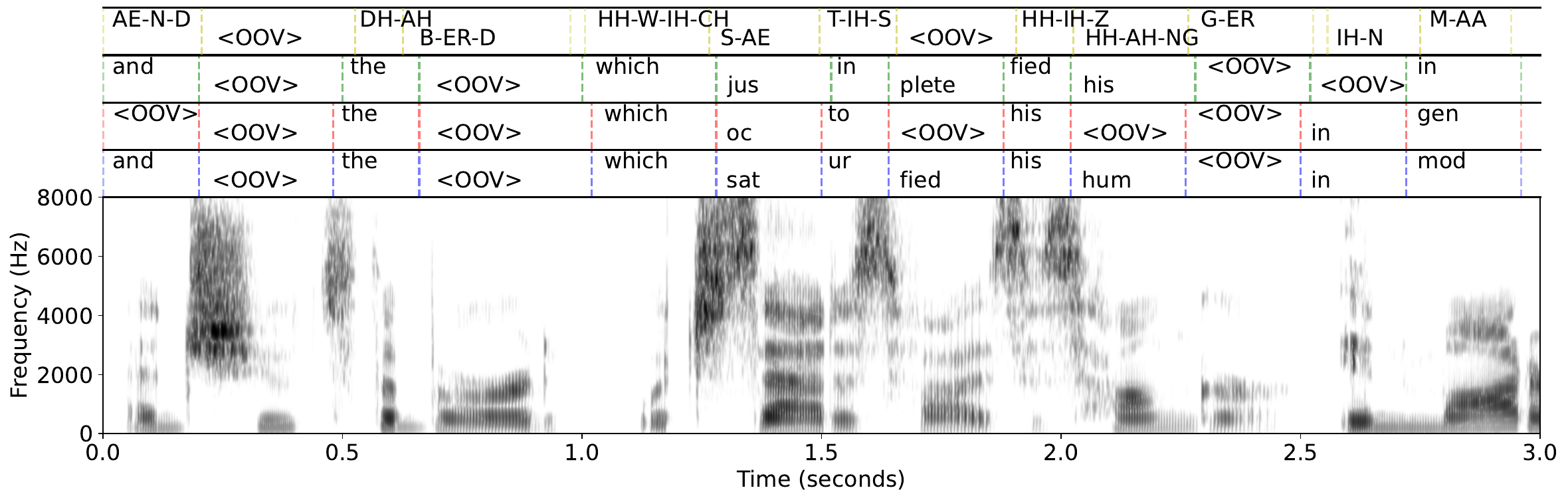}
    \caption{Reference transcript: ``And shoots the bird which sat-is-flies his hun-ger''.}    
    \end{subfigure}
    \begin{subfigure}{0.9\textwidth}
    \includegraphics[width=0.99\textwidth]{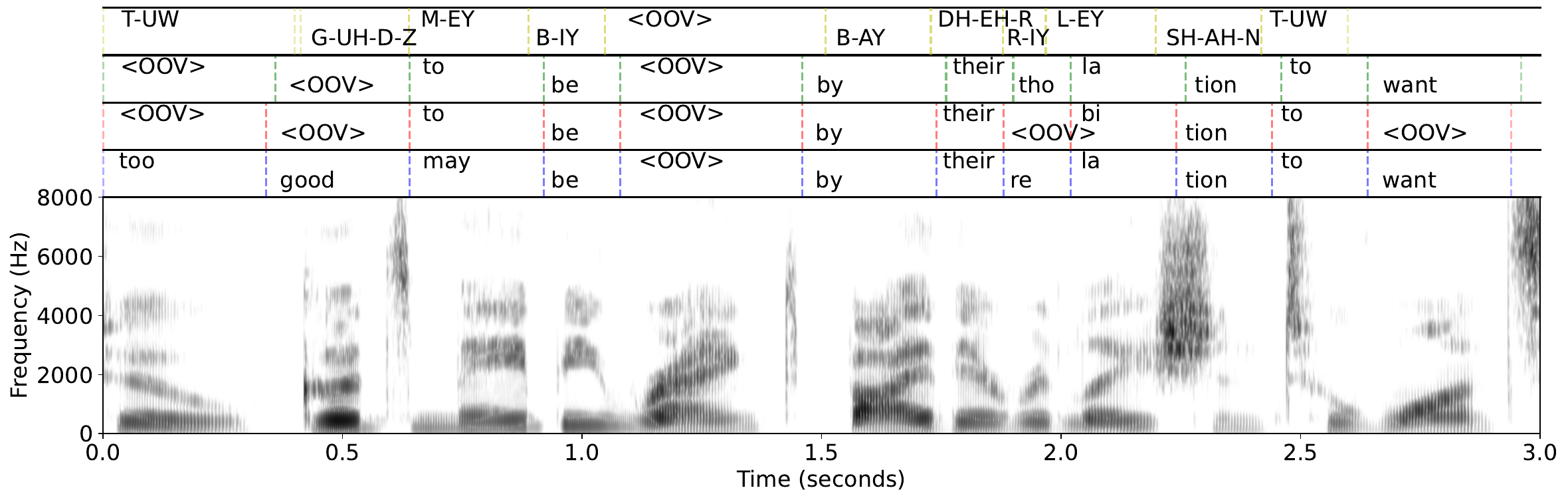}
    \caption{Reference transcript: ``two goods may be ranked by their re-la-tion to wants''.}    
    \end{subfigure}
    \begin{subfigure}{0.9\textwidth}
    \includegraphics[width=0.99\textwidth]{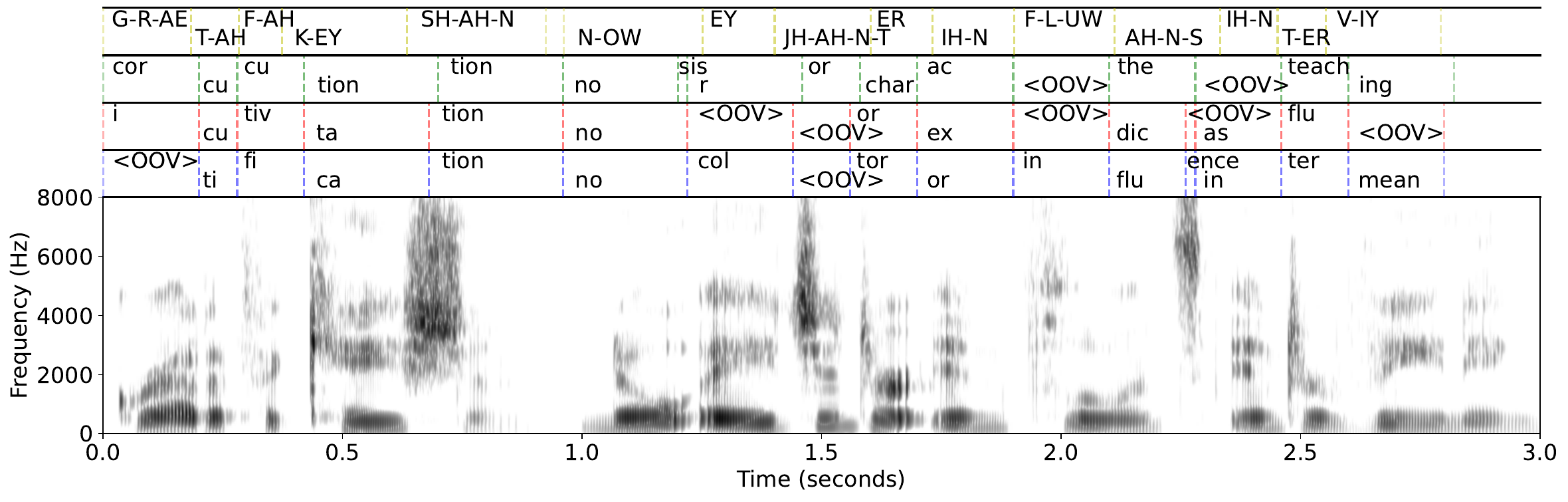}
    \caption{Reference transcript: ``Grat-i-fi-ca-tion no a-gent or in-flu-ence in-ter-ven-ing''.}    
    \end{subfigure}
    \begin{subfigure}{0.9\textwidth}
    \includegraphics[width=0.99\textwidth]{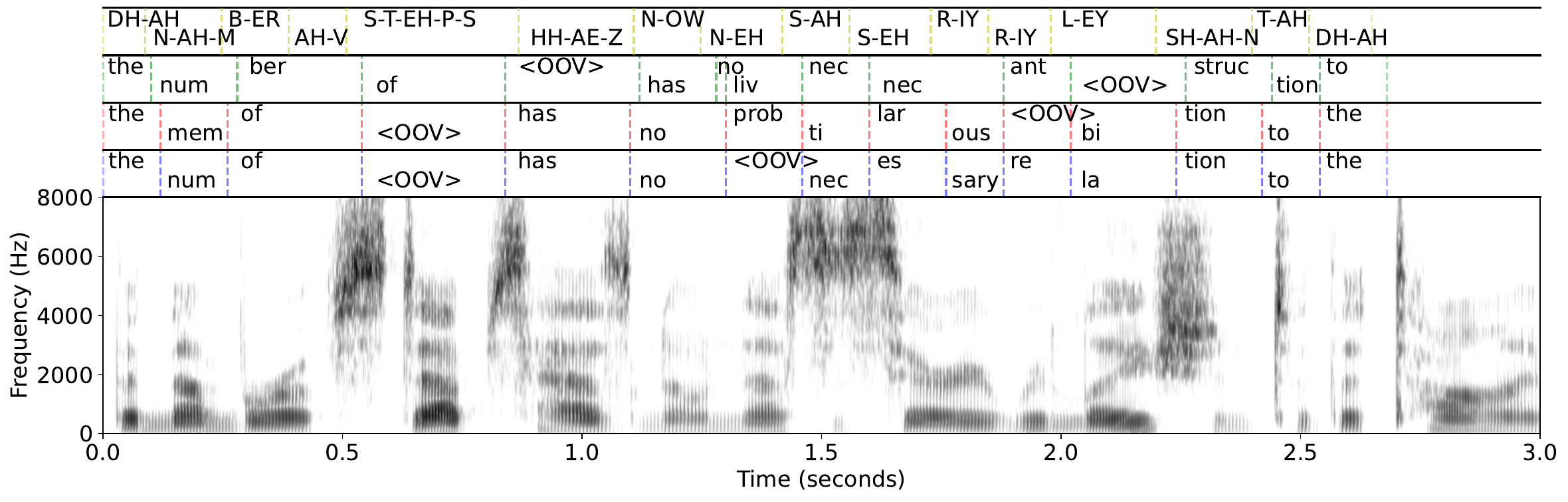}
    \caption{Reference transcript: ``The num-ber of steps has no nec-es-sary re-la-tion to the''.}    
    \end{subfigure}
    \caption{\textbf{Spectrograms of audio examples in our test split of LibriSpeech clean subsets (matched setting) and the predicted speech-text alignment by \sylcipher/ after different training stages.} Audios are truncated to the 3-second mark for better visualization. The alignment bars from top to bottom: Forced alignment, Sylber, Sylber+JE2E, Sylber+JE2E+PUSM.}
\end{figure*}

%% file: emnlp2026.bib
@string {nips = "Neural Information Processing Systems"}

@string{ICLR = "ICLR"}

@string{icassp = "ICASSP"}

@string {interspeech = "Interspeech"}

@string{TMLR = "TMLR"}

@inproceedings{Ao2022-speecht5,
  author       = {Junyi Ao and
                  Rui Wang and
                  Long Zhou and
                  Chengyi Wang and
                  Shuo Ren and
                  Yu Wu and
                  Shujie Liu and
                  Tom Ko and
                  Qing Li and
                  Yu Zhang and
                  Zhihua Wei and
                  Yao Qian and
                  Jinyu Li and
                  Furu Wei},
  editor       = {Smaranda Muresan and
                  Preslav Nakov and
                  Aline Villavicencio},
  title        = {{SpeechT5}: Unified-Modal Encoder-Decoder Pre-Training for Spoken Language
                  Processing},
  booktitle    = {Proceedings of the 60th Annual Meeting of the Association for Computational
                  Linguistics (Volume 1: Long Papers), {ACL} 2022, Dublin, Ireland,
                  May 22-27, 2022},
  pages        = {5723--5738},
  publisher    = {Association for Computational Linguistics},
  year         = {2022},
  url          = {https://doi.org/10.18653/v1/2022.acl-long.393},
  doi          = {10.18653/V1/2022.ACL-LONG.393},
}

@inproceedings{Baade2025-syllablelm,
  title={SyllableLM: Learning Coarse Semantic Units for Speech Language Models},
  author={Baade, Alan and Peng, Puyuan and Harwath, David},
  booktitle=iclr,
  year={2025}
}

@article{Bhati2022-scpc,
  author       = {Saurabhchand Bhati and
                  Jes{\'{u}}s Villalba and
                  Piotr Zelasko and
                  Laureano Moro{-}Vel{\'{a}}zquez and
                  Najim Dehak},
  title        = {Unsupervised Speech Segmentation and Variable Rate Representation
                  Learning Using Segmental Contrastive Predictive Coding},
  journal      = {{IEEE} {ACM} Trans. Audio Speech Lang. Process.},
  volume       = {30},
  pages        = {2002--2014},
  year         = {2022},
  url          = {https://doi.org/10.1109/TASLP.2022.3180684},
  doi          = {10.1109/TASLP.2022.3180684},
}

@inproceedings{Chen2019-uasr,
  author       = {Kuan{-}Yu Chen and
                  Che{-}Ping Tsai and
                  Da{-}Rong Liu and
                  Hung{-}yi Lee and
                  Lin{-}Shan Lee},
  editor       = {Gernot Kubin and
                  Zdravko Kacic},
  title        = {Completely Unsupervised Phoneme Recognition by a Generative Adversarial
                  Network Harmonized with Iteratively Refined Hidden Markov Models},
  booktitle    = {Interspeech 2019, 20th Annual Conference of the International Speech
                  Communication Association, Graz, Austria, 15-19 September 2019},
  pages        = {1856--1860},
  publisher    = {{ISCA}},
  year         = {2019},
  url          = {https://doi.org/10.21437/Interspeech.2019-2068},
  doi          = {10.21437/INTERSPEECH.2019-2068},
}

@article{arora2025-slm,
  title={On The Landscape of Spoken Language Models: A Comprehensive Survey},
  author={Arora, Siddhant and Chang, Kai-Wei and Chien, Chung-Ming and Peng, Yifan and Wu, Haibin and Adi, Yossi and Dupoux, Emmanuel and Lee, Hung-Yi and Livescu, Karen and Watanabe, Shinji},
  journal=TMLR,
  year={2025}
}

@inproceedings{panayotov2015librispeech,
  title={Librispeech: An ASR corpus based on public domain audio books},
  author={Panayotov, Vassil and Chen, Guoguo and Povey, Daniel and Khudanpur, Sanjeev},
  booktitle={2015 IEEE International Conference on Acoustics, Speech and Signal Processing (ICASSP)},
  pages={5206--5210},
  year={2015},
  organization={IEEE},
  doi={10.1109/ICASSP.2015.7178964}
}

@inproceedings{Artetxe18-unsupnmt,
  author       = {Mikel Artetxe and
                  Gorka Labaka and
                  Eneko Agirre and
                  Kyunghyun Cho},
  title        = {Unsupervised Neural Machine Translation},
  booktitle    = {6th International Conference on Learning Representations, {ICLR} 2018,
                  Vancouver, BC, Canada, April 30 - May 3, 2018, Conference Track Proceedings},
  publisher    = {OpenReview.net},
  year         = {2018},
  url          = {https://openreview.net/forum?id=Sy2ogebAW},
}

@inproceedings{Baevski2021-wav2vec-u,
  author       = {Alexei Baevski and
                  Wei{-}Ning Hsu and
                  Alexis Conneau and
                  Michael Auli},
  editor       = {Marc'Aurelio Ranzato and
                  Alina Beygelzimer and
                  Yann N. Dauphin and
                  Percy Liang and
                  Jennifer Wortman Vaughan},
  title        = {Unsupervised Speech Recognition},
  booktitle    = {Advances in Neural Information Processing Systems 34: Annual Conference
                  on Neural Information Processing Systems 2021, NeurIPS 2021, December
                  6-14, 2021, virtual},
  pages        = {27826--27839},
  year         = {2021},
  url = {https://papers.nips.cc/paper_files/paper/2021/hash/ea159dc9788ffac311592613b7f71fbb-Abstract.html},
}

@inproceedings{Baevski2020-wav2vec2,
  author       = {Alexei Baevski and
                  Yuhao Zhou and
                  Abdelrahman Mohamed and
                  Michael Auli},
  editor       = {Hugo Larochelle and
                  Marc'Aurelio Ranzato and
                  Raia Hadsell and
                  Maria{-}Florina Balcan and
                  Hsuan{-}Tien Lin},
  title        = {wav2vec 2.0: {A} Framework for Self-Supervised Learning of Speech
                  Representations},
  booktitle    = {Advances in Neural Information Processing Systems 33: Annual Conference
                  on Neural Information Processing Systems 2020, NeurIPS 2020, December
                  6-12, 2020, virtual},
 pages = {12449--12460},
 year = {2020},
  url          = {https://proceedings.neurips.cc/paper/2020/hash/92d1e1eb1cd6f9fba3227870bb6d7f07-Abstract.html},
}

@article{chen2022-wavlm,
  author       = {Sanyuan Chen and
                  Chengyi Wang and
                  Zhengyang Chen and
                  Yu Wu and
                  Shujie Liu and
                  Zhuo Chen and
                  Jinyu Li and
                  Naoyuki Kanda and
                  Takuya Yoshioka and
                  Xiong Xiao and
                  Jian Wu and
                  Long Zhou and
                  Shuo Ren and
                  Yanmin Qian and
                  Yao Qian and
                  Jian Wu and
                  Michael Zeng and
                  Xiangzhan Yu and
                  Furu Wei},
  title        = {{WavLM}: Large-Scale Self-Supervised Pre-Training for Full Stack Speech
                  Processing},
  journal      = {{IEEE} J. Sel. Top. Signal Process.},
  volume       = {16},
  number       = {6},
  pages        = {1505--1518},
  year         = {2022},
  url          = {https://doi.org/10.1109/JSTSP.2022.3188113},
  doi          = {10.1109/JSTSP.2022.3188113},
}

@inproceedings{chen2024-xeus,
  author       = {William Chen and Wangyou Zhang and Yifan Peng and Xinjian Li and Jinchuan Tian and Jiatong Shi and Xuankai Chang and Soumi Maiti and Karen Livescu and Shinji Watanabe},
  title        = {Towards Robust Speech Representation Learning for Thousands of Languages},
  booktitle    = {Proceedings of the 2024 Conference on Empirical Methods in Natural Language Processing (EMNLP)},
  pages        = {10205--10224},
  year         = {2024},
  publisher    = {Association for Computational Linguistics},
  doi          = {10.48550/arXiv.2407.00837},
  note         = {Preprint available on arXiv: arXiv:2407.00837}
}

@inproceedings{cho2023-sdhubert,
  title={SD-HuBERT: Sentence-Level Self-Distillation Induces Syllabic Organization in HuBERT},
  author={Cho, Cheol Jun and Mohamed, Abdelrahman and Li, Shang-Wen and Black, Alan W and Anumanchipalli, Gopala K},
  booktitle={ICASSP},
  year={2024}
}

@inproceedings{Cho2025-sylber,
  title     = {Sylber: Syllabic Embedding Representation of Speech from Raw Audio},
  author    = {Cheol Jun Cho and Nicholas Lee and Akshat Gupta and Dhruv Agarwal and Ethan Chen and Alan W. Black and Gopala K. Anumanchipalli},
  booktitle = iclr,
  year      = {2025},
  address   = {Singapore}
}

@inproceedings{chung2021w2vbert,
  author    = {Yu-An Chung and Yu Zhang and Wei Han and Chung-Cheng Chiu and James Qin and Ruoming Pang and Yonghui Wu},
  title     = {{W2v-BERT}: Combining Contrastive Learning and Masked Language Modeling for Self-Supervised Speech Pre-Training},
  booktitle = {2021 IEEE Automatic Speech Recognition and Understanding Workshop (ASRU)},
  pages     = {244--250},
  year      = {2021},
  publisher = {IEEE}
}

@inproceedings{fuchs2023-gradseg,
  author       = {Tzeviya Sylvia Fuchs and
                  Yedid Hoshen},
  title        = {Unsupervised Word Segmentation Using Temporal Gradient Pseudo-Labels},
  booktitle    = {{IEEE} International Conference on Acoustics, Speech and Signal Processing
                  {ICASSP} 2023, Rhodes Island, Greece, June 4-10, 2023},
  pages        = {1--5},
  publisher    = {{IEEE}},
  year         = {2023},
  url          = {https://doi.org/10.1109/ICASSP49357.2023.10095363},
  doi          = {10.1109/ICASSP49357.2023.10095363},
}

@inproceedings{Glass2012-unsup-speech,
    author = {James Glass},
    title = {Towards Unsupervised Speech Processing},
    booktitle = {International Conference on Information Sciences, Signal Processing and their Applications},
    year = {2012},
    url = {https://groups.csail.mit.edu/sls/publications/2012/Glass-ISSPA12.pdf},
}

@inproceedings{goldwasser2023theory,
  title     = {A Theory of Unsupervised Translation Motivated by Understanding Animal Communication},
  author    = {Shafi Goldwasser and David F. Gruber and Adam Tauman Kalai and Orr Paradise},
  booktitle = nips,
  year      = {2023}
}

@inproceedings{hsu2021textfree,
  author    = {Hsu, Wei-Ning and Harwath, David and Song, Chunxi and Glass, James},
  title     = {Text-Free Image-to-Speech Synthesis Using Learned Segmental Units},
  booktitle = {Proceedings of the 59th Annual Meeting of the Association for Computational Linguistics (ACL)},
  year      = {2021},
  pages     = {5820--5831},
  doi       = {10.18653/v1/2021.acl-long.454}
}

@article{Hsu2022-hubert,
  author       = {Wei{-}Ning Hsu and
                  Benjamin Bolte and
                  Yao{-}Hung Hubert Tsai and
                  Kushal Lakhotia and
                  Ruslan Salakhutdinov and
                  Abdelrahman Mohamed},
  title        = {{HuBERT}: Self-Supervised Speech Representation Learning by Masked Prediction
                  of Hidden Units},
  journal      = {{IEEE} {ACM} Trans. Audio Speech Lang. Process.},
  volume       = {29},
  pages        = {3451--3460},
  year         = {2021},
  url          = {https://doi.org/10.1109/TASLP.2021.3122291},
  doi          = {10.1109/TASLP.2021.3122291},
}

@inproceedings{Lample2018-unsupmtmono,
  author       = {Guillaume Lample and
                  Alexis Conneau and
                  Ludovic Denoyer and
                  Marc'Aurelio Ranzato},
  title        = {Unsupervised Machine Translation Using Monolingual Corpora Only},
  booktitle    = {6th International Conference on Learning Representations, {ICLR} 2018,
                  Vancouver, BC, Canada, April 30 - May 3, 2018, Conference Track Proceedings},
  publisher    = {OpenReview.net},
  year         = {2018},
  url          = {https://openreview.net/forum?id=rkYTTf-AZ},
}

@inproceedings{levy2025unsupervised,
  title     = {Unsupervised Translation of Emergent Communication},
  author    = {Ido Levy and Orr Paradise and Boaz Carmeli and Ron Meir and Shafi Goldwasser and Yonatan Belinkov},
  booktitle = {Proceedings of the Thirty-Ninth AAAI Conference on Artificial Intelligence (AAAI-25)},
  year      = {2025},
  address   = {Philadelphia, PA, USA},
  publisher = {AAAI Press}
}

@INPROCEEDINGS{librilight,
  author={J. {Kahn} and M. {Rivière} and W. {Zheng} and E. {Kharitonov} and Q. {Xu} and P. E. {Mazaré} and J. {Karadayi} and V. {Liptchinsky} and R. {Collobert} and C. {Fuegen} and T. {Likhomanenko} and G. {Synnaeve} and A. {Joulin} and A. {Mohamed} and E. {Dupoux}},
  booktitle=icassp,
  title={Libri-Light: A Benchmark for ASR with Limited or No Supervision}, 
  year={2020},
  pages={7669-7673}
}

@inproceedings{Liu2018-asru,
    author = {Da-Rong Liu and Kuan-Yu Chen and Hung-Yi Lee and Lin-shan Lee},
    title = {Completely unsupervised phoneme recognition by adversarially learning mapping relationships from audio embeddings},
    booktitle=interspeech,
    year={2018},
    url = {https://www.isca-speech.org/archive_v0/Interspeech_2018/pdfs/1800.pdf},
}

@inproceedings{Liu2022-utts,
  title     = {Simple and Effective Unsupervised Speech Synthesis},
  author    = {Alexander H. Liu and Cheng-I Jeff Lai and Wei-Ning Hsu and Michael Auli and Alexei Baevski and James Glass},
  booktitle = {Interspeech 2022},
  pages     = {843--847},
  year      = {2022},
  doi       = {10.21437/Interspeech.2022-11071},
  publisher = {ISCA},
}

@inproceedings{Liu2023-wav2vecu2,
  author={Liu, Alexander H. and Hsu, Wei-Ning and Auli, Michael and Baevski, Alexei},
  booktitle={2022 IEEE Spoken Language Technology Workshop (SLT)}, 
  title={Towards End-to-End Unsupervised Speech Recognition}, 
  year={2023},
  pages={221--228},
  doi={10.1109/SLT54892.2023.10023187}}

@inproceedings{liu-etal-2025-superbpe,
  title={{SuperBPE}: Space travel for language models},
  author={Alisa Liu and Jonathan Hayase and Valentin Hofmann and Sewoong Oh and Noah A Smith and Yejin Choi},
  booktitle={Second Conference on Language Modeling},
  year={2025},
  url={https://arxiv.org/abs/2503.13423}
}

@inproceedings{ma2019-unpaired,
  title     = {Unpaired Image-to-Speech Synthesis with Multimodal Information Bottleneck},
  author    = {Shuang Ma and Daniel McDuff and Yale Song},
  booktitle = {Proceedings of the IEEE/CVF International Conference on Computer Vision (ICCV)},
  pages     = {7515--7524},
  year      = {2019},
  doi       = {10.1109/ICCV.2019.00765},
  url       = {https://openaccess.thecvf.com/content_ICCV_2019/html/Ma_Unpaired_Image-to-Speech_Synthesis_With_Multimodal_Information_Bottleneck_ICCV_2019_paper.html}
}

@article{mohamed2022self,
  author    = {Abdelrahman Mohamed and Hung-yi Lee and Lasse Borgholt and Jakob D. Havtorn and Joakim Edin and Christian Igel and Katrin Kirchhoff and Shang-Wen Li and Karen Livescu and Lars Maal{\o}e and others},
  title     = {Self-Supervised Speech Representation Learning: A Review},
  journal   = {IEEE Journal of Selected Topics in Signal Processing},
  volume    = {16},
  number    = {6},
  pages     = {1179--1210},
  year      = {2022},
  publisher = {IEEE},
  doi       = {10.1109/JSTSP.2022.3203489}
}

@inproceedings{Ni-unsuptts-interspeech2022,
  author       = {Junrui Ni and
                  Liming Wang and
                  Heting Gao and
                  Kaizhi Qian and
                  Yang Zhang and
                  Shiyu Chang and
                  Mark Hasegawa{-}Johnson},
  editor       = {Hanseok Ko and
                  John H. L. Hansen},
  title        = {Unsupervised Text-to-Speech Synthesis by Unsupervised Automatic Speech
                  Recognition},
  booktitle    = {Interspeech 2022, 23rd Annual Conference of the International Speech
                  Communication Association, Incheon, Korea, 18-22 September 2022},
  pages        = {461--465},
  publisher    = {{ISCA}},
  year         = {2022},
  url          = {https://doi.org/10.21437/Interspeech.2022-816},
  doi          = {10.21437/INTERSPEECH.2022-816},
}

@article{Ni2025-jstti,
    author = {Junrui Ni and Liming Wang and Yang Zhang and Kaizhi Qian and Heting Gao and Mark
Hasegawa-Johnson and Chang D. Yoo},
    title = {Towards Unsupervised Speech Recognition Without Pronunciation Models},
    journal = {arXiv},
    year = 2025,
    url = {https://arxiv.org/pdf/2406.08380}
}

@inproceedings{Panayotov15-LibriSpeech,
  author    = {Vassil Panayotov and
               Guoguo Chen and
               Daniel Povey and
               Sanjeev Khudanpur},
  title     = {Librispeech: An {ASR} corpus based on public domain audio books},
  booktitle = icassp,
  pages     = {5206--5210},
  year      = {2015},
  url          = {https://doi.org/10.1109/ICASSP.2015.7178964},
  doi          = {10.1109/ICASSP.2015.7178964},
}

@inproceedings{peng2022-vghubert,
  author       = {Puyuan Peng and
                  David Harwath},
  editor       = {Hanseok Ko and
                  John H. L. Hansen},
  title        = {Word Discovery in Visually Grounded, Self-Supervised Speech Models},
  booktitle    = {Interspeech 2022, 23rd Annual Conference of the International Speech
                  Communication Association, Incheon, Korea, 18-22 September 2022},
  pages        = {2823--2827},
  publisher    = {{ISCA}},
  year         = {2022},
  url          = {https://doi.org/10.21437/Interspeech.2022-10652},
  doi          = {10.21437/INTERSPEECH.2022-10652},
}

@inproceedings{Peng2023-syllable,
  title={Syllable Segmentation and Cross-Lingual Generalization in a Visually Grounded, Self-Supervised Speech Model},
  author={Peng, Puyuan and Li, Shang-Wen and Räsänen, Okko and Mohamed, Abdelrahman and Harwath, David},
  booktitle={Interspeech},
  year={2023}
}

@inproceedings{sennrich-etal-2016-neural,
    title = "Neural Machine Translation of Rare Words with Subword Units",
    author = "Sennrich, Rico  and
      Haddow, Barry  and
      Birch, Alexandra",
    editor = "Erk, Katrin  and
      Smith, Noah A.",
    booktitle = "Proceedings of the 54th Annual Meeting of the Association for Computational Linguistics (Volume 1: Long Papers)",
    month = aug,
    year = "2016",
    address = "Berlin, Germany",
    publisher = "Association for Computational Linguistics",
    url = "https://aclanthology.org/P16-1162/",
    doi = "10.18653/v1/P16-1162",
    pages = "1715--1725"
}

@inproceedings{shi1997normalized,
  author    = {Jianbo Shi and Jitendra Malik},
  title     = {Normalized Cuts and Image Segmentation},
  booktitle = {Proceedings of the IEEE Conference on Computer Vision and Pattern Recognition (CVPR)},
  year      = {1997},
  pages     = {731--737},
  doi       = {10.1109/CVPR.1997.609407}
}

@inproceedings{shi21c_interspeech,
  title     = {AISHELL-3: A Multi-Speaker Mandarin TTS Corpus},
  author    = {Yao Shi and Hui Bu and Xin Xu and Shaoji Zhang and Ming Li},
  year      = {2021},
  booktitle = {Interspeech 2021},
  pages     = {2756--2760},
  doi       = {10.21437/Interspeech.2021-755},
  issn      = {2958-1796},
}

@inproceedings{shi2023-unsupslu,
  title     = {Bridging Speech and Text Pre-trained Models with Unsupervised ASR},
  author    = {Jiatong Shi and Chan-Jan Hsu and Holam Chung and Dongji Gao and Paola Garcia and Shinji Watanabe and Ann Lee and Hung-yi Lee},
  booktitle = {Proceedings of the 2023 IEEE International Conference on Acoustics, Speech and Signal Processing (ICASSP)},
  pages     = {1--5},
  year      = {2023},
  publisher = {IEEE},
  doi       = {10.1109/ICASSP49357.2023.10096827}
}

@article{Tseng2024-reborn,
  author       = {Liang{-}Hsuan Tseng and
                  En{-}Pei Hu and
                  David Cheng{-}Han Chiang and
                  Yuan Tseng and
                  Hung{-}yi Lee and
                  Lin{-}Shan Lee and
                  Shao{-}Hua Sun},
  title        = {{REBORN:} Reinforcement-Learned Boundary Segmentation with Iterative
                  Training for Unsupervised {ASR}},
  journal      = {CoRR},
  volume       = {abs/2402.03988},
  year         = {2024},
  url          = {https://doi.org/10.48550/arXiv.2402.03988},
  doi          = {10.48550/ARXIV.2402.03988},
  eprinttype    = {arXiv},
  eprint       = {2402.03988},
}

@inproceedings{Vaswani2017,
  title={Attention is all you need},
  author={Vaswani, Ashish and Shazeer, Noam and Parmar, Niki and Uszkoreit, Jakob and Jones, Llion and Gomez, Aidan N and Kaiser, {\L}ukasz and Polosukhin, Illia},
  booktitle=nips,
  pages={5998--6008},
  year={2017}
}

@inproceedings{wang-etal-2023-unsupasr-theory,
    title = "A Theory of Unsupervised Speech Recognition",
    author = "Wang, Liming  and
      Hasegawa-Johnson, Mark  and
      Yoo, Chang",
    booktitle = "Proceedings of the 61st Annual Meeting of the Association for Computational Linguistics (Volume 1: Long Papers)",
    month = jul,
    year = "2023",
    address = "Toronto, Canada",
    publisher = "Association for Computational Linguistics",
    pages = "1192--1215",
}

@inproceedings{wang-etal-2023-simple,
  title     = {Simple and Effective Unsupervised Speech Translation},
  author    = {Changhan Wang and Hirofumi Inaguma and Peng-Jen Chen and Ilia Kulikov and Yun Tang and Wei-Ning Hsu and Michael Auli and Juan Pino},
  booktitle = {Proceedings of the 61st Annual Meeting of the Association for Computational Linguistics (Volume 1: Long Papers)},
  pages     = {10771--10784},
  year      = {2023},
  month     = jul,
  address   = {Toronto, Canada},
  publisher = {Association for Computational Linguistics},
  doi       = {10.18653/v1/2023.acl-long.602},
  url       = {https://aclanthology.org/2023.acl-long.602/}
}

@inproceedings{wang-etal-2023-unsup-speech2sign,
author       = {Liming Wang and
                  Junrui Ni and
                  Heting Gao and
                  Jialu Li and
                  Kai Chieh Chang and
                  Xulin Fan and
                  Junkai Wu and
                  Mark Hasegawa{-}Johnson and
                  Chang Dong Yoo},
  editor       = {Anna Rogers and
                  Jordan L. Boyd{-}Graber and
                  Naoaki Okazaki},
  title        = {Listen, Decipher and Sign: Toward Unsupervised Speech-to-Sign Language
                  Recognition},
  booktitle    = {Findings of the Association for Computational Linguistics: {ACL} 2023,
                  Toronto, Canada, July 9-14, 2023},
  pages        = {6785--6800},
  publisher    = {Association for Computational Linguistics},
  year         = {2023},
  url          = {https://doi.org/10.18653/v1/2023.findings-acl.424},
  doi          = {10.18653/V1/2023.FINDINGS-ACL.424},
}

@inproceedings{wang2024unsupervised,
  author       = {Liming Wang and
                  Mark Hasegawa{-}Johnson and
                  Chang D. Yoo},
  title        = {Unsupervised Speech Recognition with N-skipgram and Positional Unigram
                  Matching},
  booktitle    = {{IEEE} International Conference on Acoustics, Speech and Signal Processing,
                  {ICASSP} 2024, Seoul, Republic of Korea, April 14-19, 2024},
  pages        = {10936--10940},
  publisher    = {{IEEE}},
  year         = {2024},
  url          = {https://doi.org/10.1109/ICASSP48485.2024.10446327},
  doi          = {10.1109/ICASSP48485.2024.10446327},
}

@article{chu2024qwen2,
  title={Qwen2-audio technical report},
  author={Chu, Yunfei and Xu, Jin and Yang, Qian and Wei, Haojie and Wei, Xipin and Guo, Zhifang and Leng, Yichong and Lv, Yuanjun and He, Jinzheng and Lin, Junyang and others},
  journal={arXiv preprint arXiv:2407.10759},
  year={2024}
}

@article{defossez2024moshi,
  title={Moshi: a speech-text foundation model for real-time dialogue},
  author={D{\'e}fossez, Alexandre and Mazar{\'e}, Laurent and Orsini, Manu and Royer, Am{\'e}lie and P{\'e}rez, Patrick and J{\'e}gou, Herv{\'e} and Grave, Edouard and Zeghidour, Neil},
  journal={arXiv preprint arXiv:2410.00037},
  year={2024}
}

@inproceedings{conneau21_interspeech,
  title     = {Unsupervised Cross-Lingual Representation Learning for Speech Recognition},
  author    = {Alexis Conneau and Alexei Baevski and Ronan Collobert and Abdelrahman Mohamed and Michael Auli},
  year      = {2021},
  booktitle = {Interspeech 2021},
  pages     = {2426--2430},
  doi       = {10.21437/Interspeech.2021-329},
  issn      = {2958-1796},
}

@book{cover2006-it,
    author = {T. M. Cover and J. A. Thomas},
    title = {Elements of Information Theory},
    edition   = {2},
    year      = {2006},
    publisher = {Wiley-Interscience},
}

@inproceedings{hoshen2018non,
  title={Non-Adversarial Unsupervised Word Translation},
  author={Hoshen, Yedid and Wolf, Lior},
  booktitle={Proceedings of the 2018 Conference on Empirical Methods in Natural Language Processing},
  pages={469--478},
  year={2018}
}

@inproceedings{artetxe2018unsupervised,
  title={Unsupervised Statistical Machine Translation},
  author={Artetxe, Mikel and Labaka, Gorka and Agirre, Eneko},
  booktitle={Proceedings of the 2018 Conference on Empirical Methods in Natural Language Processing},
  pages={3632--3642},
  year={2018}
}

@inproceedings{artetxe-etal-2019-effective,
    title = "An Effective Approach to Unsupervised Machine Translation",
    author = "Artetxe, Mikel  and
      Labaka, Gorka  and
      Agirre, Eneko",
    editor = "Korhonen, Anna  and
      Traum, David  and
      M{\`a}rquez, Llu{\'i}s",
    booktitle = "Proceedings of the 57th Annual Meeting of the Association for Computational Linguistics",
    month = jul,
    year = "2019",
    address = "Florence, Italy",
    publisher = "Association for Computational Linguistics",
    url = "https://aclanthology.org/P19-1019/",
    doi = "10.18653/v1/P19-1019",
    pages = "194--203",
}

@inproceedings{lample2018phrase,
  title={Phrase-Based \& Neural Unsupervised Machine Translation},
  author={Lample, Guillaume and Ott, Myle and Conneau, Alexis and Denoyer, Ludovic and Ranzato, Marc’Aurelio},
  booktitle={Proceedings of the 2018 Conference on Empirical Methods in Natural Language Processing},
  pages={5039--5049},
  year={2018}
}

@article{Pratap2020MLSAL,
  title={MLS: A Large-Scale Multilingual Dataset for Speech Research},
  author={Vineel Pratap and Qiantong Xu and Anuroop Sriram and Gabriel Synnaeve and Ronan Collobert},
  journal={ArXiv},
  year={2020},
  volume={abs/2012.03411}
}

@INPROCEEDINGS{Zhang2009-syllable,
  author={Yaodong Zhang and James R. Glass},
  booktitle=icassp, 
  title={Speech rhythm guided syllable nuclei detection}, 
  year={2009},
  volume={},
  number={},
  pages={3797-3800},
  doi={10.1109/ICASSP.2009.4960454}}

@inproceedings{yu2022automatic,
    title={Automatic Speech Recognition Datasets in Cantonese: A Survey and New Dataset},
    booktitle={Proceedings of the Thirteenth Language Resources and Evaluation Conference},
    author={Tiezheng Yu and Rita Frieske and Peng Xu and Samuel Cahyawijaya and Cheuk Tung Shadow Yiu and Holy Lovenia and Wenliang Dai and Elham J. Barezi and Qifeng Chen and Xiaojuan Ma and Bertram E. Shi and Pascale Fung},
    year={2022},
    url={https://aclanthology.org/2022.lrec-1.696/},
}

@misc{ithuan2019suisiann,
  author       = {{Ì-thuân}},
  title        = {{Suísiann Dataset}},
  year         = {2019},
  howpublished = {\url{https://suisiann-dataset.ithuan.tw/}},
  note         = {Accessed: 2026-01-22}
}

@inproceedings{Drinea2007,
  author={Drinea, Eleni and Kirsch, Adam},
  booktitle={2007 IEEE International Symposium on Information Theory}, 
  title={Directly Lower Bounding the Information Capacity for Channels with I.I.D. Deletions and Duplications}, 
  year={2007},
  volume={},
  number={},
  pages={1731-1735},
  doi={10.1109/ISIT.2007.4557471},
}

@book{KandybowiczTorrence2017,
  editor    = {Jason Kandybowicz and Harold Torrence},
  title     = {Africa's Endangered Languages: Documentary and Theoretical Approaches},
  publisher = {Oxford University Press},
  year      = {2017}
}

@article{FarukhVulchanova2014,
  author  = {A. Farukh and Mila Vulchanova},
  title   = {Predictors of Reading in Urdu: Does Deep Orthography Have an Impact?},
  journal = {Dyslexia},
  volume  = {20},
  number  = {2},
  pages   = {146--166},
  year    = {2014}
}

@misc{KhanEtAl2024PashtoNLP,
  author       = {Zohaib Ahmad Khan and Yuanqing Xia and Fiza Khaliq and Javed Ali Khan and Nek Dil Khan},
  title        = {From Tradition to Technology: A Systematic Survey on Navigating Pashto in Modern {NLP}},
  year         = {2024},
  howpublished = {SSRN},
  doi          = {10.2139/ssrn.5031721},
  note         = {Available at SSRN}
}
